\theoremstyle{plain} 
\newtheorem{thm}{Theorem}
\newtheorem{cor}{Corollary}
\theoremstyle{definition}
\theoremstyle{remark}
\newtheorem{ex}{Example}
\newtheorem{remark}{Remark}
\newtheorem*{ex1}{Example {\em 1}}
\newtheorem*{ex3}{Example {\em 3}}
\newtheorem*{illus}{Illustration}
\newcommand{\prob}{\mathsf{P}}
\newcommand{\unif}{{\sf Unif}}
\newcommand{\nm}{{\sf N}}
\newcommand{\chisq}{{\sf ChiSq}}
\newcommand{\RR}{\mathbb{R}}
\newcommand{\ZZ}{\mathbb{Z}}
\newcommand{\TT}{\mathbb{T}}
\newcommand{\eps}{\varepsilon}
\newcommand{\prior}{\mathsf{Q}}
\newcommand{\kernel}{\mathsf{K}}
\newcommand{\marg}{\mathsf{M}}
\newcommand{\cred}{\mathscr{C}}
\newcommand{\lPi}{\amalg}%{\underline{\Pi}}
\newcommand{\uPi}{\Pi}%{\overline{\Pi}}
\newcommand{\inn}{\text{inn}}
\newcommand{\out}{\text{out}}
\title{An efficient Monte Carlo method for valid prior-free possibilistic statistical inference}
\author{Ryan Martin\footnote{Department of Statistics, North Carolina State University, {\tt rgmarti3@ncsu.edu}}
}
\date{\today}
\begin{document}

\maketitle 

\begin{abstract}
Inferential models (IMs) offer prior-free, Bayesian-like posterior degrees of belief designed for statistical inference, which feature a frequentist-like calibration property that ensures reliability of said inferences.  The catch is that IMs' degrees of belief are possibilistic rather than probabilistic and, since the familiar Monte Carlo methods approximate probabilistic quantities, there are significant computational challenges associated with putting this framework into practice.  The present paper overcomes these challenges by developing a new Monte Carlo method designed specifically to approximate the IM's possibilistic output.  The proposal is based on a characterization of the possibilistic IM's credal set, which identifies the ``best probabilistic approximation'' of the IM as a mixture distribution that can be readily approximated and sampled from.  These samples can then be transformed into an approximation of the possibilistic IM.  Numerical results are presented highlighting the proposed approximation's accuracy and computational efficiency.

\smallskip

\emph{Keywords and phrases:} confidence distribution; credal set; Gaussian possibility; inferential model; inner probabilistic approximation.
\end{abstract}

\section{Introduction}
\label{S:intro}

Numerous efforts to advance Fisher's vision of fiducial inference have been made over the years.  Among these efforts, arguably ``one of the original statistical innovations of the 2010s'' \citep{cui.hannig.im} is the {\em inferential model} (IM) framework put forward in \citet{imbasics, imcond, immarg} and later synthesized in \citet{imbook}.  What's unique about the IM framework is that its output takes the form of an imprecise probability---specifically, a possibility measure---and, consequently, inference is based on possibilistic reasoning: hypotheses assigned small possibility are refuted by the data while hypotheses whose complements are assigned small possibility are corroborated by the data.  This shift from probability to possibility theory has principled motivations: the IM's possibilities satisfy a frequentist-style {\em validity} property that Bayesian/fiducial probabilities do not satisfy, i.e., the IM's possibility assigned to true hypotheses tends to be not small, hence it's a provably rare event that the data-driven IM output refutes a true hypothesis or corroborates a false hypothesis.  More details about IMs and their properties are provided in Section~\ref{S:background} below; some basic background on possibility theory is presented in Appendix~\ref{A:primer}.  Despite all that the IM framework has to offer, it has been slow to gain traction, largely due to computational challenges.  Indeed, as shown recently in \citet{gong.jasa.mult}, the familiar Monte Carlo methods used to approximate ordinary probabilities aren't enough to approximate imprecise probabilities; something more is needed.  The goal of this paper is to identify the aforementioned ``something more'' for possibilistic IMs and to develop the corresponding computational machinery needed to make the IM framework accessible for everyday use in applications.  

The jumping off point here is the fact that all (coherent) imprecise probabilities, including the possibilistic IM's output, correspond to a (non-empty) set of ordinary or precise probabilities, called the {\em credal set}.  Different brands of imprecise probabilities have their own mathematical properties, and one way this distinction manifests is in the kinds of constraints imposed on their associated credal sets.  Possibility measures are among the simplest imprecise probabilities and, in turn, their credal sets have a relatively simple characterization.  In particular, results like that in Theorem~\ref{thm:credal.char} below make it relatively easy to identify probability distributions in the credal set that ``best approximate'' the IM's possibilistic output.  The best approximation is relatively simple mathematically but might be difficult to compute; fortunately, further approximations can be made.  Such second-level approximations can take various forms but, inspired by the asymptotic results in \citet{imbvm.ext}, here I'll use a family of Gaussian approximations as proposed in \citet{imvar.ext}.  These individual Gaussian approximations on their own are not fully satisfactory, as Cella and Martin acknowledge, because they only ensure accuracy in a limited sense.  If the individual approximations could be appropriately stitched together, then that would give an overall accurate approximation---but Cella and Martin weren't able to provide a suitable stitching procedure.  Here, using Theorem~\ref{thm:credal.char}, I propose a stitching procedure that achieves the relevant statistical and computational desiderata, thereby closing the critical gap left by Cella and Martin and facilitating a principled probabilistic approximation to the IM's output.  I also show how this probabilistic approximation can be transformed back into a possibility measure that is both easy to compute and accurately approximates the target IM output.  

The remainder of this paper is organized as follows.  Section~\ref{S:background} provides the necessary background on possibilistic IMs, highlighting the existing computational bottlenecks.  The credal set characterization and proposed Monte Carlo strategy are presented in Section~\ref{S:mcim}.  Three illustrative examples, including logistic regression and a semiparametric problem involving censored data, are presented in Section~\ref{S:examples}.  These are cases where direct-but-naive IM computations are expensive but not out of reach, so the accuracy of the proposed approximation can be assessed.  Section~\ref{S:simulations} shows empirically that the proposed IM approximation performs well compared to standard Bayesian and likelihood-based solutions.  A higher-dimensional logistic regression example is considered in Section~\ref{S:application}, where the proposed IM approximation can, in a matter of seconds, easily solve a problem that was practically out of reach by previous efforts in the IM literature.  
%As a more challenging application, Section~\ref{S:application} presents a non-trivial analysis of a small, censored data set involving a semiparametric model.  For this application, the direct-but-naive IM solution is too expensive for practitioners to take seriously but the proposed Monte Carlo-based solution can be carried out in a matter of seconds.  
The insights and results here in the present paper make up an important first step in a series of developments needed to create an IM toolbox for practitioners.  The concluding remarks in Section~\ref{S:discuss} describe, among other things, my vision of where these developments can and need to go.  Relevant technical details are given in the appendix/supplement.
%, including a description of the variational approximation developed in \citet{imvar.ext}. 

\section{Background}
\label{S:background}

The first IM developments relied on random sets and their corresponding belief functions.  More recent developments in \citet{martin.partial2}, building on \citet{plausfn, gim}, define a possibilistic IM by applying a version of the  probability-to-possibility transform to the model's relative likelihood.  This shift is philosophically important, but the present review focuses on the details, properties, and computation of this possibilistic IM. 

Consider a parametric model $\{ \prob_\theta: \theta \in \TT\}$ consisting of probability distribution supported on a sample space $\ZZ$, indexed by a parameter space $\TT \subseteq \RR^d$.  Suppose that the observable data $Z$, taking values in $\ZZ$, is a sample from the distribution $\prob_\Theta$, where $\Theta \in \TT$ is the unknown/uncertain ``true value.''  The model and observed data $Z=z$ together determine a likelihood function $\theta \mapsto L_z(\theta)$ and a corresponding relative likelihood
\[ 
R(z,\theta) = \frac{L_{z}(\theta)}{\sup_\vartheta L_{z}(\vartheta)}, \quad z \in \ZZ, \quad \theta \in \TT.
\]
I'll implicitly assume that the denominator is finite for almost all $z$.  As is typical in the literature, I will also assume that prior information about $\Theta$ is vacuous. 

The relative likelihood itself defines a data-dependent possibility contour, i.e., a non-negative function such that $\sup_\theta R(z,\theta) = 1$ for each $z$; see Appendix~\ref{A:primer}.  This contour, in turn, determines a possibility measure that can be used for uncertainty quantification about $\Theta$, given $Z=z$, which has been extensively studied in the literature \citep[e.g.,][]{shafer1982, wasserman1990b, denoeux2006, denoeux2014}.  In particular, these references suggest assigning data-driven possibility values to hypotheses $H$ about $\Theta$ via the rule 
\[ H \mapsto \sup_{\theta \in H} R(z, \theta), \quad H \subseteq \TT. \]
This purely likelihood-driven possibility has a number of desirable properties, which I'll not get into here.  What it lacks, however, is a justification for why the ``possibilities'' assigned to hypotheses about $\Theta$ have belief-forming inferential weight.  With vacuous prior information, there's no Bayesian justification behind these possibility assignments, so justification can only come from a frequentist-like calibration property---i.e., assigning small possibilities to true hypotheses is a provably rare event---but the purely likelihood-based possibility assignment doesn't meet this requirement in a practically useful way.  So, while the relative likelihood provides a natural, data-driven parameter ranking in terms of model fit, this is insufficient for principled and reliable statistical inference.  

%calibration property (relative to the posited model) that gives meaning or belief-forming inferential weight to the ``possibility'' it assigns to hypotheses about the unknown $\Theta$.  More specifically, if we offer a possibility measure as our quantification of statistical uncertainty, then its corresponding credal set should contain probability distributions that are statistically meaningful.  We are assuming vacuous prior information, so there are no meaningful or distinguished Bayesian posterior distributions.  The only other natural property to enforce on the credal set's elements is that they be ``confidence distributions.''  But, as in \eqref{eq:credal.char}, this interpretation requires the $\alpha$-cuts of the relative likelihood, i.e., $\{\theta: R(x^n, \theta) > \alpha\}$, to be $100(1-\alpha)$\% confidence sets for $\Theta$, which generally they are not.  Therefore, the relative likelihood alone is not enough. 

Fortunately, however, it is conceptually straightforward to achieve the desired calibration by applying what \citet{martin.partial} calls ``validification''---a version of the probability-to-possibility transform \citep[e.g.,][]{dubois.etal.2004, hose2022thesis}.  In particular, for observed data $Z=z$, the possibilistic IM's contour is defined as
\begin{equation}
\label{eq:contour}
\pi_{z}(\theta) = \prob_\theta\bigl\{ R(Z,\theta) \leq R(z, \theta) \bigr\}, \quad \theta \in \TT,
\end{equation}
and the possibility measure---or upper probability---is likewise defined as 
\begin{equation}
\label{eq:maxitive}
\uPi_{z}(H) = \sup_{\theta \in H} \pi_{z}(\theta), \quad H \subseteq \TT.
\end{equation}
It won't be needed in the present paper, but there's a corresponding necessity measure, or lower probability, defined via conjugacy: $\lPi_{z}(H) := 1 - \uPi_{z}(H^c)$.  An essential feature of this IM construction is its so-called {\em validity property}:
\begin{equation}
\label{eq:valid}
\sup_{\theta \in \TT} \prob_\theta\bigl\{ \pi_{Z}(\theta) \leq \alpha \bigr\} \leq \alpha, \quad \text{for all $\alpha \in [0,1]$}. 
\end{equation}
This has a number of important consequences.  First, \eqref{eq:valid} immediately implies that 
\begin{equation}
\label{eq:conf.set}
C_\alpha(z) = \{\theta \in \TT: \pi_{z}(\theta) \geq \alpha\}, \quad \alpha \in [0,1]
\end{equation}
is a $100(1-\alpha)$\% frequentist confidence set, i.e., $\sup_{\theta \in \TT} \prob_\theta\bigl\{ C_\alpha(Z) \not\ni \theta \bigr\} \leq \alpha$.  Second, from \eqref{eq:maxitive} and \eqref{eq:valid}, it readily follows that 
\begin{equation}
\label{eq:valid.alt}
\sup_{\theta \in H} \prob_\theta\bigl\{ \uPi_{Z}(H) \leq \alpha \bigr\} \leq \alpha, \quad \text{all $\alpha \in [0,1]$, all $H \subseteq \TT$}. 
\end{equation}
In words, a valid IM assigns possibility $\leq \alpha$ to true hypotheses at rate $\leq \alpha$ as a function of data $Z$.  This gives the IM its ``inferential weight''---\eqref{eq:valid.alt} implies that $\uPi_{z}(H)$ is not expected to be small when $H$ is true, so one is inclined to doubt the truthfulness of a hypothesis $H$ if $\uPi_{z}(H)$ is small.  Third, the above property ensures that the possibilistic IM is safe from false confidence \citep{balch.martin.ferson.2017, martin.nonadditive, martin.belief2024}, unlike all default-prior Bayes and fiducial solutions.  An even stronger, {\em uniform} version of \eqref{eq:valid.alt} holds, which offers opportunities for the data analyst to do more with the IM than test pre-specified hypotheses \citep{cella.martin.probing}.  See \citet{martin.partial2, martin.partial3, martin.isipta2023, reimagined} for further details about IMs' properties, including connections to Bayesian/fiducial inference. 

In a Bayesian analysis, inference is based on summaries of the posterior distribution, e.g., probabilities of relevant hypotheses, expectations of loss/utility functions, etc.  All of these summaries boil down to integration involving the posterior density function.  Virtually the same statement holds for the possibilistic IM: the lower--upper probability pairs for relevant hypotheses, lower--upper expectations of loss/utility functions \citep{imdec}, etc., involve optimization of the IM's possibility contour $\pi_z$.  More precisely, the calculus underlying possibilistic reasoning is {\em Choquet integration} \citep[see, e.g.,][Appendix~C]{lower.previsions.book}.  The specific details (see Appendix~\ref{A:primer}) aren't directly relevant to the developments here, so it suffices to say Choquet integration relies heavily on optimization.  This observation has two important consequences: first, from the foundational point of view, compared to Bayes and other probabilistic frameworks, it's precisely the IM's optimization-driven Choquet integration that ensures validity is achieved; second, from a practical point of view, if reliable possibilistic inference requires optimization of $\pi_z$, then having efficient strategies for evaluating $\pi_z$ is imperative. 

%Unlike with Bayesian/Lebesgue integration, however, the IM's optimization operation ensures that the validity property inherent in $\pi_{z}$ is transferred to the marginal IM for $k(\Theta)$, which implies that it's safe from false confidence.  From a practical point of view, if possibilistic inference requires optimization of $\pi_z$, then efficient strategies for evaluating $\pi_z$ is imperative.  

While the IM construction is conceptually simple and its properties are quite strong, computation can be a challenge.  The key observation is that the sampling distribution of the relative likelihood $R(Z,\theta)$, under $\prob_\theta$, is rarely available in closed-form to facilitate exact computation of $\pi_{z}$.  So, instead, the go-to strategy is to approximate that sampling distribution using Monte Carlo at each value of $\theta$ on a sufficiently fine grid \citep[e.g.,][]{martin.partial2, hose.hanss.martin.belief2022}.  That is, the possibility contour is approximated as 
\begin{equation}
\label{eq:pi.naive}
\pi_{z}(\theta) \approx \frac1M \sum_{m=1}^M 1\{ R(Z_{m,\theta}, \theta) \leq R(z, \theta) \}, \quad \theta \in \TT,
\end{equation}
where $Z_{m,\theta}$ are independent copies of the data $Z$, specifically drawn from $\prob_\theta$, for $m=1,\ldots,M$.  The above computation is feasible at one or a few different $\theta$ values, but applications often require this to be carried out over a fine grid covering the relevant portion of the parameter space.  For example, identifying the confidence set in \eqref{eq:conf.set} requires finding which $\theta$'s satisfy $\pi_{z}(\theta) \geq \alpha$, and a naive approach is to compute the contour over a huge grid and then keep those that (approximately) meet the aforementioned condition.  This amounts to lots of wasted and expensive computations.  More generally, the relevant summaries of the IM output are Choquet integrals involving optimization, and doing so numerically requires many contour function evaluations.  This is a serious bottleneck, so new and more efficient computational strategies are desperately needed.

\section{Monte Carlo methods for IMs}
\label{S:mcim}

\subsection{Probabilistic approximation to a possibilistic IM}
%\subsection{Key insights}
\label{SS:insights}

Given a possibilistic IM $z \mapsto (\lPi_z,\uPi_z)$ with contour $\pi_z$, I'll refer to the level sets $C_\alpha(z) = \{ \theta \in \TT: \pi_z(\theta) \geq \alpha\}$ defined in \eqref{eq:conf.set} as $\alpha$-cuts. According to the theory reviewed in Section~\ref{S:background}, the $\alpha$-cut $C_\alpha(z)$ is a nominal $100(1-\alpha)$\% confidence region for $\Theta$.  Aside from the practical utility of confidence regions, there's much more information in the IM's $\alpha$-cuts. Towards this, the credal set $\cred(\uPi_z)$ corresponding to the IM's $\uPi_z$ is the set of (data-dependent) probability measures that it dominates, i.e., 
\[ \cred(\uPi_z) = \{ \prior_z \in \text{probs}(\TT): \text{$\prior_z(H) \leq \uPi_z(H)$ for all measurable $H$}\}, \]
where $\text{probs}(\TT)$ is the set of probability measures on $\TT$.  Then the claim is that the $\alpha$-cuts determine the IM's credal set.  Indeed, there's the following well-known characterization \citep[e.g.,][]{cuoso.etal.2001, destercke.dubois.2014} of the credal set $\cred(\uPi_z)$:
\begin{equation}
\label{eq:credal.1}
\prior_z \in \cred(\uPi_z) \iff \prior_z\{ C_\alpha(z) \} \geq 1-\alpha \quad \text{for all $\alpha \in [0,1]$}. 
\end{equation}
That is, a probability distribution $\prior_z$ belongs to the IM's credal set if and only if it's a confidence distribution, i.e., it assigns probability  $\geq 1-\alpha$ to each of the IM's $\alpha$-cuts.  More about this notion of ``confidence distributions'' and when Bayesian/fiducial distributions achieve equality in \eqref{eq:credal.1} can be found in \citet{martin.isipta2023, reimagined}.  At a soon-to-be-demonstrated practical level, elements of the IM's credal set have a relatively simple characterization, as the following theorem demonstrates.  The result is presented in a more general, not-IM-specific context because this simplifies the notation. 

\begin{thm}
\label{thm:credal.char}
Let $\uPi$ be a possibility measure on a space $\TT$, with corresponding contour $\pi$ and $\alpha$-cuts $C_\alpha = \{\theta \in \TT: \pi(\theta) \geq \alpha\}$, $\alpha \in (0,1)$.  Then $\prior \in \cred(\uPi)$ if and only if 
\begin{equation}
\label{eq:credal.2}
\prior(\cdot) = \int_0^1 \kernel^\beta(\cdot) \, \marg(d\beta), 
\end{equation}
for some Markov kernel $\kernel^\beta$ indexed by $\beta \in [0,1]$ such that $\kernel^\beta$ is fully supported on $C_\beta$, i.e., $\kernel^\beta(C_\beta) = 1$ for each $\beta \in [0,1]$ and some probability measure $\marg$ on $[0,1]$ such that a random variable with distribution $\marg$ is stochastically no smaller than $\unif(0,1)$. 
\end{thm}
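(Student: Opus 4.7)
The strategy is to prove both directions of the equivalence by leveraging the nested monotonicity of the $\alpha$-cuts (namely $\beta \geq \alpha$ implies $C_\beta \subseteq C_\alpha$) together with the characterization in \eqref{eq:credal.1}, which says $\prior \in \cred(\uPi)$ if and only if $\prior(C_\alpha) \geq 1-\alpha$ for every $\alpha \in [0,1]$. This reduces the whole question to transferring a single numerical inequality across a mixture.

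For the sufficiency (``if'') direction, suppose $\prior$ admits the mixture representation \eqref{eq:credal.2}. Fix $\alpha \in [0,1]$. For every $\beta \in [\alpha,1]$ the nesting gives $C_\beta \subseteq C_\alpha$, so $\kernel^\beta(C_\alpha) \geq \kernel^\beta(C_\beta) = 1$. Integrating against $\marg$ yields
\[ \prior(C_\alpha) = \int_0^1 \kernel^\beta(C_\alpha)\,\marg(d\beta) \geq \marg([\alpha,1]) \geq 1 - \alpha, \]
where the final inequality uses the assumed stochastic dominance of $\marg$ over $\unif(0,1)$. Invoking \eqref{eq:credal.1} then gives $\prior \in \cred(\uPi)$.

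For the necessity (``only if'') direction, I would build the mixture representation directly out of $\prior$. The natural choice is to push $\prior$ forward through the contour: let $\marg$ be the law of $B := \pi(\theta)$ when $\theta \sim \prior$, and let $\kernel^\beta$ be a regular conditional distribution of $\theta$ given $\pi(\theta) = \beta$. Then $\kernel^\beta$ is concentrated on the level set $\{\pi = \beta\} \subseteq C_\beta$, and disintegration recovers $\prior = \int_0^1 \kernel^\beta\,\marg(d\beta)$. Stochastic dominance of $\marg$ over $\unif(0,1)$ follows from the credal-set membership, since $\prob_\marg(B \geq \alpha) = \prior(C_\alpha) \geq 1 - \alpha$ for every $\alpha$; converting this into $F_B(t) \leq t$ is a right-continuity argument, writing $\{B \leq t\} = \bigcap_{\alpha > t}\{B < \alpha\}$ and invoking continuity of probability from above.

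The main technical obstacle is ensuring that the regular conditional distribution $\kernel^\beta$ exists as a bona fide Markov kernel. This is a standard disintegration statement provided $\TT$ is a sufficiently regular measurable space, e.g., standard Borel, and $\pi$ is measurable, both of which are implicit in the IM setup. Beyond this regularity point the argument is essentially bookkeeping, the real content being the observation that $\pi(\theta)$ itself serves as the mixing variable and that the resulting pushforward automatically inherits the correct stochastic-dominance property from $\prior \in \cred(\uPi)$.
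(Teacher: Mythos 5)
Your proof is correct and follows essentially the same route as the paper: the sufficiency direction is the identical nesting-plus-splitting of the integral at $\alpha$, and the necessity direction is the same disintegration of $\prior$ along the contour, taking $\marg$ to be the law of $\pi(\Theta)$ and $\kernel^\beta$ the conditional distribution of $\Theta$ given $\pi(\Theta)=\beta$. The only difference is that you spell out two standard technical points (existence of the regular conditional distribution and the right-continuity step for the stochastic dominance) that the paper leaves implicit.
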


\begin{proof}
See Appendix~\ref{A:proof}.
\end{proof}

I would be surprised if the characterization in Theorem~\ref{thm:credal.char} is genuinely new, but I've not found a reference for this result exactly.  A similar result is presented in \citet[][Theorem~2.1]{wasserman1990} characterizing the credal set of a belief function determined by a given distribution and set-valued mapping.  \citet[][Eq.~2.44]{hose2022thesis} also gives a special case of the right-hand side of \eqref{eq:credal.2} but makes no claims that his version is a complete characterization of $\cred(\uPi)$; what I specifically propose to do in \eqref{eq:Q.star} below closely matches the Hose's formula.  As an aside, choosing the kernel $\kernel^\beta$ to be a uniform distribution supported on $C_\beta$ is an idea that commonly appears in the literature, e.g., the {\em pignistic probability} \citep{smets.kennes.1994} or the {\em Shapley value} \citep{shapley1953} of a game.

The above theorem describes the contents of the credal set $\cred(\uPi)$ corresponding to a possibility measure $\uPi$.  A relevant follow-up question is if there exists an element in $\cred(\uPi)$ that's ``maximally consistent'' with $\uPi$, i.e., a distribution $\prior^\inn$ such that 
\begin{equation}
\label{eq:inner}
\text{equality in \eqref{eq:credal.1} holds: $\prior^\inn(C_\alpha) = 1-\alpha$ for each $\alpha \in [0,1]$}. 
\end{equation}
Such a probability distribution will be called an {\em inner probabilistic approximation} of $\uPi$.  The following corollary characterizes the inner probabilistic approximations.  

\begin{cor}
\label{cor:credal.char}
The equality in \eqref{eq:inner} is achieved if and only if  $\kernel^\beta(C_\alpha) = 0$ for all pairs $(\alpha,\beta)$ with $\beta < \alpha$---that is, if $\kernel^\beta$ is fully supported on $\partial C_\beta$---and $\marg=\unif(0,1)$.  
\end{cor}

\begin{proof}
See the end of the proof of Theorem~\ref{thm:credal.char}. 
\end{proof}

%The theorem's proof offers some insights that help to answer this question.  Indeed, the two properties needed to achieve the equality in \eqref{eq:inner} are, first, that the kernel satisfies $\kernel^\beta(C_\alpha) = 0$ for all pairs $(\alpha,\beta)$ with $\beta < \alpha$ and, second, that $\marg=\unif(0,1)$.  

Note that $\prior^\inn$ in \eqref{eq:inner} may not exist, but this is rare in applications.  Indeed, if the contour $\pi$ doesn't span the entire range $[0,1]$, e.g., is bounded away from 0, then \eqref{eq:inner} can't be satisfied: some of the $\alpha$-cuts $C_\alpha$ would be equal to $\TT$ and every probability distribution supported on $\TT$ would assign probability $1 > 1-\alpha$ to these sets.  This manifests in the conditions of Corollary~\ref{cor:credal.char} because, in such a case, there's no ``boundary'' on which to support the kernel.  Even in these unfortunate cases, a probabilistic approximation can be made, but it would assign probability exceeding $1-\alpha$ to some of the $\alpha$-cuts $C_\alpha$.

\begin{illus}
As a simple-but-important application, which will prove to be useful in Section~\ref{SS:computation} below, consider the Gaussian possibility measure $\uPi$, with corresponding mean vector $m$ and covariance matrix $V$, whose contour is given by
\[ \pi(\theta) = 1 - F_d\{ (\theta-m)^\top V^{-1} (\theta-m) \}, \quad \theta \in \TT = \RR^d, \]
where $F_d$ is the $\chisq(d)$ distribution function.  The corresponding $\alpha$-cuts are ellipsoids 
\[ C_\alpha = \{ \theta \in \RR^d: (\theta-m)^\top V^{-1} (\theta-m) \leq F_d^{-1}(1-\alpha)\}, \quad \alpha \in [0,1]. \]
Of course, the normal distribution, $\nm_d(m, V)$, assigns probability $1-\alpha$ to each $\alpha$-cut, so that must be the inner probabilistic approximation; in fact, this is a tautology since the Gaussian possibility measure is defined as the outer possibilistic approximation of the Gaussian probability distribution \citep{imbvm.ext}.  For a direct connection to Corollary~\ref{cor:credal.char}, one first finds that the marginal distribution $\marg$ of $\pi(\Theta)$ is $\unif(0,1)$ under $\Theta \sim \nm_d(m,V)$.  Second, since a Gaussian random vector---or any elliptically symmetric random vector for that matter  \citep[e.g.,][Theorem~3.1]{hult.lindskog.2002}---can be written as $\Theta = m + R V^{1/2} U$, where $V^{1/2}$ is the Cholesky factor of $V$, $(R,U)$ are independent, with $U$ a uniform random vector on the unit sphere in $\RR^d$ and $R$ a non-negative random variable, the corresponding kernel $\kernel^\beta$ is the distribution of $m+\{F_d^{-1}(1-\beta)\}^{1/2} \, V^{1/2} \, U$, which is fully supported on the boundary $\partial C_\beta$ of $C_\beta$.  
\end{illus}

A direct implementation of the inner probabilistic approximation of the possibilistic IM based on the characterization in Corollary~\ref{cor:credal.char} is given in Appendix~\ref{A:direct}.  That direct approximation would generally be more expensive computationally compared to using a second-layer approximation as described in Sections~\ref{SS:outer}--\ref{SS:computation}.  No claims are made that this second-layer approximation is universal but, for the regular parametric models under consideration here, the results shown below based on this simpler approach are strikingly accurate, so the computationally cheaper solution is preferred.

\subsection{Approximating the probabilistic approximation}
%\subsection{Probabilistic approximation to a possibilistic IM}
\label{SS:outer}

To set the scene for a second-layer approximation, suppose that, in addition to the possibilistic IM output $\pi_z$ and $\uPi_z$, I have another data-dependent possibility contour $\pi_z^\out$ and corresponding possibility measure $\uPi_z^\out$.  These two sets of possibilistic output are related in the following critical way: $\pi_z(\theta) \leq \pi_z^\out(\theta)$ for all $\theta \in \TT$.  The construction of such a $\pi_z^\out$ will be discussed in Section~\ref{SS:computation}.  Ideally, $\pi_z$ and $\pi_z^\out$ would be close, subject to the dominance constraint, so I'll interpret $\pi_z^\out$ as an {\em outer approximation} of $\pi_z$.  The dominance property ``$\pi_z \leq \pi_z^\out$'' implies that $\uPi_z$ is {\em more specific} than $\uPi_z^\out$, i.e., $\uPi_z$ makes stronger claims than does $\uPi_z^\out$.  The interpretation of specificity might be easiest to understand through the relation induced between the credal sets: 
\[ \pi_z \leq \pi_z^\out \implies \cred(\uPi_z) \subseteq \cred(\uPi_z^\out). \]
In words, it takes fewer probability distributions to describe $\uPi_z$ than to describe $\uPi_z^\out$; consequently, there's less (epistemic) uncertainty in $\uPi_z$ than in $\uPi_z^\out$ and, hence, the former makes stronger claims than the latter.  It suffices to interpret $\Pi_z^\out$ as an IM that's (slightly) more conservative than $\uPi_z$.  Importantly, since the original IM with contour $\pi_z$ is valid in the sense of \eqref{eq:valid}, then so is the IM with contour $\pi_z^\out$.  For example, the $\alpha$-cuts $C_\alpha^\out(z) = \{\theta: \pi_z^\out(\theta) \geq \alpha\}$ of $\pi_z^\out$ are confidence sets, just like $C_\alpha(z)$. 
%, i.e., 
%\[ \sup_{\theta \in \TT} \prob_\theta\{ C_\alpha^\out(Z) \not\ni \theta \} \leq \alpha, \quad \alpha \in [0,1]. \]
$\uPi_z^\out$ is appealing only if its computational benefits outweigh the loss of specificity/efficiency compared to $\uPi_z$.  This is often the case, as the numerical results below show.  

From here, my proposal is straightforward: approximate $\prior_z^\inn$ by $\prior_z^\star$, where the latter is defined as an inner probabilistic approximation of the outer possibilistic approximation $\uPi_z^\out$.  The relationship between $\prior_z^\inn$ and $\prior_z^\star$ can be described as follows:
\[ \prior_z^\inn \in \cred(\uPi_z) \subseteq \cred(\uPi_z^\out) \ni \prior_z^\star. \]
From this it's at least intuitively clear that, if $\pi_z^\out \approx \pi_z$ (subject to the dominance constraint), then $\prior_z^\star$ and $\prior_z^\inn$ should be similarly close.  Note, however, that $\prior_z^\star \not\in \cred(\uPi_z)$ in general.  That $\prior_z^\star$ might fail to be a member of $\cred(\uPi_z)$ is not an issue.  What matters is that $\prior_z^\star$ can maintain the reliability of $\uPi_z$ (or $\prior_z^\inn$) while being easier to compute.  In fact, since all the members of $\cred(\uPi_z)$ are, by definition, at least as tightly concentrated as $\prior_z^\inn$, an arguably safer strategy is to expand the set $\cred(\uPi_z)$ of candidate approximations a bit to create a better opportunity of finding one with the ``right'' concentration.  

Since $\prior_z^\star$ is itself an inner probabilistic approximation, a concrete expression for it, along with a strategy for numerically approximating it via Monte Carlo follows from the characterization result in Corollary~\ref{cor:credal.char}.  Indeed, $\prior_z^\star$ can be written as 
\begin{equation}
\label{eq:Q.star}
\prior_z^\star(\cdot) = \int_0^1 \kernel_z^\alpha(\cdot) \, d\alpha. 
\end{equation}
where $\kernel_z^\alpha$ is a distribution supported on $\partial C_\alpha^\out(z)$.  Whether $\prior_z^\star$ in \eqref{eq:Q.star} is an accurate approximation of and easier to compute than $\prior_z^\inn$ depends on the choice of $\pi_z^\out$.  Section~\ref{SS:computation} presents a simple construction of $\pi_z^\out$ that achieves both of these desiderata. 

%To conclude this subsection, the reader should keep in mind that the possibilistic IM is the target, and even the ``best'' probabilistic approximation can be off-target in its answers to certain questions.  Section~\ref{SS:back.to.poss} below offers an approach to overcome the inherent limitations of any probabilistic approximation of the possibilistic IM. 

\subsection{Implementing the second-layer approximation}
\label{SS:computation}

Here I construct the possibility contour $\pi_z^\out$ that dominates $\pi_z$ in the sense described in Section~\ref{SS:outer}.  Since the contour $\pi_z^\out$ is determined by its $\alpha$-cuts $\{C_\alpha^\out(z): \alpha \in [0,1]\}$, the presentation that follows focuses on the latter.  

For the relative likelihood-based possibilistic IM defined in Section~\ref{S:background}, under the usual regularity conditions, \citet{imbvm.ext} established an asymptotic Gaussianity result that says, if the sample size is large, then 
\begin{equation}
\label{eq:approx.gaussian}
\pi_z(\theta) \approx 1 - F_d\bigl( (\theta - \hat\theta_z)^\top J_z (\theta - \hat\theta_z) \bigr), \quad \theta \in \TT \subseteq \RR^d, 
\end{equation}
where $\hat\theta_z$ is the maximum likelihood estimator, $J_z$ is the  $d \times d$ observed Fisher information matrix, and, again, $F_d$ is the $\chisq(d)$ distribution function.  Consequently, the $\alpha$-cuts $C_\alpha(z)$ of $\pi_z$ are approximately ellipsoidal.  This suggests a basic elliptical form for $C_\alpha^\out(z)$ and hence for $\pi_z^\out$, which is asymptotically correct/optimal, and is shown to be quite accurate even for small samples.  As I'll explain below, assuming that $C_\alpha(z)$ can be roughly approximated by an ellipsoid is not equivalent to assuming Gaussianity. 

Note that the approximation \eqref{eq:approx.gaussian} also holds for different parametrizations. In fact, it might be more accurate in one parametrization than in another, e.g., non-negativity constraints on $\theta$ suggest applying a Gaussian approximation on the $\log\theta$-scale.  What's described below on the $\theta$-scale can also be applied following a reparametrization, and I freely reparametrize as appropriate in my numerical examples below.  

For constructing $C_\alpha^\out(z)$, I leverage an idea advanced by \citet{imvar.ext}.  Below I describe their idea---different from how they described it---and the relevant technical details are given in Appendix~\ref{A:imvar}.  Define an eigenvalue-scaled version $J_z(\sigma)$ of the observed information $J_z$ as follows: if $J_z = E \Lambda E^\top$ is the spectral decomposition, then 
\[ J_z(\sigma) = E \, \text{diag}(\sigma^{-1}) \, \Lambda \, \text{diag}(\sigma^{-1}) \, E^\top, \]
where $\sigma^{-1}$ is the entry-wise reciprocal of the vector $\sigma \in (0,\infty)^d$, and $\text{diag}(\cdot)$ is the operator that takes a vector argument to a diagonal matrix with that vector on the diagonal.  Then $\sigma$ acts like a vector of scaling parameters that can stretch or contract the level sets of the quadratic form $\theta \mapsto (\theta - \hat\theta_z)^\top J_z(\sigma)^{-1} (\theta - \hat\theta_z)$ in different directions.  Next, define the $\sigma$-specific $\alpha$-cut for a Gaussian $\nm_d( \hat\theta_z, J_z(\sigma)^{-1} )$ possibility contour 
\begin{align}
C_\alpha^\sigma(z) & = \{\theta: 1 - F_d\bigl( (\theta - \hat\theta_z)^\top J_z(\sigma) (\theta - \hat\theta_z) \bigr) \geq \alpha \} \notag \\
& = \{\theta: (\theta-\hat\theta_z)^\top J_z(\sigma) (\theta - \hat\theta_z) \leq F_d^{-1}(1-\alpha)\}, \label{eq:var.cut}
\end{align}
Since $C_\alpha(z)$ is, for each $\alpha$ and $z$, approximately elliptical with center $\hat\theta_z$ and shape roughly that of a $J_z$-quadratic form, one expects that there exists a scaling vector $\sigma=\sigma(z,\alpha)$ such that, at least approximately,  
\[ C_\alpha^{\sigma(z,\alpha)}(z) \supseteq C_\alpha(z), \quad \alpha \in [0,1]. \]
A cartoon illustration of this bounding is shown in Figure~\ref{fig:bean}, and the algorithm proposed in \citet{imvar.ext} for finding such a $\sigma(z,\alpha)$ is given in Appendix~\ref{A:imvar}.  There are two important points worth mentioning here.  First, the algorithm developed in \citet{imvar.ext} is efficient in the sense that evaluating $\sigma(z,\alpha)$ has complexity is $O(d)$ compared to $O(e^d)$ for the naive approach in \eqref{eq:pi.naive}.  Second, note that it's not a single scaling factor $\sigma$ that achieves the containment in the above display for each $\alpha$ as would be the case if $\pi_z$ were exactly Gaussian; instead, it's an $\alpha$-specific adjustment which, at least intuitively, suggests more flexible ``scale mixture of normals'' form.  

\begin{figure}[t]
\begin{center}
\scalebox{0.9}{
\begin{tikzpicture}%[thick, every node/.style = { execute at begin node = $, execute at end   node = $}]
%\draw [help lines] grid (4,3);
\filldraw[color=black!20, fill=black!5, rotate=140] (-1.25, 1.25) to[closed, curve through = { (1, 1.25) (2.5, 2.25) (2.25, -0.65) (-3.3, -0.15) (-2.5, 0.65) }] (-1.25, 1.25); 
\filldraw[color=black, fill=black] (0,0) circle (1.5pt) node[anchor=east]{$\hat\theta_z$};
\draw (1.1, -0.65) circle (0pt) node[anchor=west]{$C_\alpha(z)$}; 
%\filldraw[dashed, rotate=50] (0.5,0) ellipse (2.1 and 1.1);
\draw[rotate=162] (0,0) ellipse (4.15 and 1.86);
\draw[dashed, rotate=162] (0,0) ellipse (4.75 and 1.15);
%\draw[dashed, rotate=22] (0,0) ellipse (2.75 and 2.5);
%\filldraw[color=black, fill=black] (-1.25,1.25) circle (1.5pt);
%\filldraw[color=black, fill=black] (1, 1.25) circle (1.5pt);
%\filldraw[color=black, fill=black] (2.5, 2.25) circle (1.5pt);
%\filldraw[color=black, fill=black] (-3, -1.15) circle (1.5pt);
\draw (2.40, 1.1) circle (0pt) node[anchor=west]{$\partial C_\alpha^{\sigma(z,\alpha)}(z)$}; 
%\filldraw[color=black, fill=black] (-2.5, 0.5) circle (1.5pt);
\end{tikzpicture}
}
\end{center}
\caption{Approximating the IM contour's $\alpha$-cut $C_\alpha(z)$ by an ellipse $C_\alpha^\sigma(z)$ with a ``good'' choice of $\sigma$ (solid) and with a ``bad'' choice of $\sigma$ (dashed).}
\label{fig:bean}
\end{figure}

Putting everything together, my specific proposal is to define the $C_\alpha^\out(z) := C_\alpha^{\sigma(z,\alpha)}(z)$, which, in turn, leads to the probabilistic approximation $\prior_z^\star$ as in \eqref{eq:Q.star}.  And the mixture form \eqref{eq:Q.star} immediately suggests a Monte Carlo method for evaluating various $\prior_z^\star$-probabilities; see Section~\ref{SS:back.to.poss}.  The whole procedure is succinctly described in Algorithm~\ref{algo:inner}.  A brief asymptotic analysis justifying the accuracy of the proposed approximation is presented in  Remark~\ref{re:asymptotics} of Appendix~\ref{A:remarks}.  A few conceptual points are given next. 
\begin{itemize}
\item Of course, the two sampling steps in Algorithm~\ref{algo:inner} are trivial.  The only non-trivial step, namely ``evaluate $\sigma(z,\alpha)$,'' is easy to miss.  This involves an $\alpha$-specific run of the algorithm summarized in Appendix~\ref{A:imvar} which, fortunately, typically only involves a few updates and a few Monte Carlo evaluations of the IM contour $\pi_z$ per update.  So, this is easy to compute and has complexity scaling linearly in dimension $d$.  Moreover, it's easy to parallelize the evaluation of $\sigma(z,\alpha)$ over multiple $\alpha$'s, dramatically reducing computation time; see, also, Remark~\ref{re:parallel} in Appendix~\ref{A:remarks}. 
\item The benefit to using the elliptical approximations should now be apparent, since sampling from $\prior_z^\star$ is effectively no different than sampling from a Gaussian as in the illustration towards the end of Section~\ref{SS:insights}.  But remember that $\prior_z^\star$ is generally not Gaussian---the $\alpha$-adaptivity gives it more flexibility.  
\item Despite the aforementioned adaptivity, forcing $\partial C_\alpha^\out(z)$ to be elliptically-shaped potentially implies some inefficiency in small samples, as Figure~\ref{fig:bean} reveals.  Theoretically and empirically, this restriction is rather mild, but there may be applications in which an elliptical shape might not be appropriate.  Fortunately, the elliptical structure isn't necessary to deploy the solution proposed here; see Appendix~\ref{A:direct}.  
\end{itemize} 

%---for example, in high-dimensional cases where asymptotic normality itself may not hold or when the parameter space is non-Euclidean (e.g., a simplex) and no fully satisfactory reparametrization can be found.  

\begin{algorithm}[t]
\SetAlgoLined
Independent draws from $\prior_z^\star$ in \eqref{eq:Q.star} are obtained by repeating the following steps:
\begin{enumerate}
\item Sample $A \sim \unif(0,1)$;
\vspace{-2mm}
\item Given $A=\alpha$, evaluate $\sigma(z,\alpha)$, and then sample $\Theta$ just as described in the Gaussian example at the end of Section~\ref{SS:insights}, i.e., 
\[ (\Theta \mid A=\alpha) \overset{\text{\tiny dist}}{\, = \,} \hat\theta_z + \{ F_d^{-1}(1-\alpha) \}^{1/2} \, J_z^{-1}\bigl( \sigma(z,\alpha) \bigr)^{1/2} \, U, \]
where half-power of a matrix means the Cholesky factor and $U$ is uniformly distributed on the sphere in $\RR^d$. 
\end{enumerate}
\caption{Sampling from the probabilistic approximation $\prior_z^\star$ of $\uPi_z$.}
\label{algo:inner}
\end{algorithm}

Two final remarks to put the contribution into perspective will close out this subsection.  First, Cella and Martin's goal was the same as that in the present paper.  What they offered as their final, albeit incomplete solution was a simple, $\alpha$-specific Gaussian probabilistic approximation, namely, $\prior_z^{\text{\sc cm},\alpha} = \nm_d(\hat\theta_z, J_z(\sigma(z,\alpha))^{-1})$, which is then transformed into a possibility measure as described in Section~\ref{SS:back.to.poss} below.  Their approximation is only designed to capture a certain aspect of the original IM, in particular, its $\alpha$-cut for the one specified $\alpha$, so it need not be (although often is) a good approximation of $\pi_z$ overall.  {\em The goal is obviously a good overall approximation to $\pi_z$} and \citet[][Sec.~6]{imvar.ext} acknowledge this limitation: ``finding a way to stitch together these $\alpha$-specific IM approximations is an important open problem.''  The present paper's contribution is filling this significant gap in the IM computational developments.  

Second, my proposed solution has certain aspects in common with the {\em calibrated bootstrap} developed in \citet{calibrated.boostrap}.  In fact, my solution was actually inspired by the calibrated bootstrap.  Specialized points are needed to make this comparison so, for the sake of brevity here, I defer these details to Remark~\ref{re:caliboot} in Appendix~\ref{A:remarks}.

%\subsection{Asymptotic analysis}
%\label{SS:asymptotics}

\subsection{Back to a possibilistic IM} 
\label{SS:back.to.poss}

The previous subsections described a probabilistic approximation $\prior_z^\star$ of the possibilistic IM, which could be sampled with relative ease.  Relevant summaries could be readily approximated using the samples from $\prior_z^\star$, and subsequently used to approximate the IM's summaries.  The challenge is that, for each summary, there are mathematical limits to how well a probabilistic approximation can match the target possibilistic one; for some summaries, the approximation is accurate while for others it's not.  The question is if one can convert the probabilistic approximation $\prior_z^\star$ to a possibilistic approximation that better represents the target IM while retaining the computational efficiency.  

One simple way forward is to note that the probabilistic approximation just described can be interpreted as a possibility-to-probability transform, which has an inverse probability-to-possibility transform.  Let $r_z: \TT \to \RR$ be a (possibly data-dependent) ranking function on the parameter space, where larger values of $r_z(\theta)$ indicate that $\theta$ is ``higher ranked'' in some meaningful sense.  Good examples of ranking functions are, first, the density function $q_z^\star$ corresponding to the distribution $\prior_z^\star$ and, second, the likelihood function $L_z$ of the underlying model.  With both of these choices, the meaningfulness of the ranking is clear.  Taking the ranking function to be the density $q_z^\star$ of $\prior_z^\star$ gives the tightest contours, similar to how highest density sets have smallest volume among those with a fixed probability content, but ``tightest contours'' isn't the objective.  Moreover, there may be computational challenges associated with use of the density-based ranking.  After all, $q_z^\star$ would need to be estimated using, say, kernel methods applied to the samples from $\prior_z^\star$ as described above, which is non-trivial when $d > 1$.  
%See below for more on how the ranking function affects the approximation accuracy.  

In any case, for a given ranking function $r_z$, the probability-to-possibility transform of $\prior_z^\star$ returns the possibility contour 
\begin{equation}
\label{eq:p2p}
\hat\pi_z(\theta) = \prior_z^\star\bigl\{ r_z(\Theta) \leq r_z(\theta) \bigr\}, \quad \theta \in \TT. 
\end{equation}
%This transformation should familiar because it resembles a p-value calculation, where the ranking is typically defined by a test statistic; for more on this, see \citet{imchar, martin.partial2}. 
To see that $\sup_\theta \hat\pi_z(\theta) = 1$ and, hence, that this transformation defines a genuine possibility contour, just take $\theta$ equal (or converging) to a ``highest ranked'' value according to $r_z$.  
%Given that $\omega_z$ is a genuine possibility contour, it makes sense to define the corresponding possibility measure via optimization as before, i.e., $\uOmega_z(H) = \sup_{\theta \in H} \omega_z(\theta)$.  
In the examples that follow, the approximation in \eqref{eq:p2p} is called a {\em stitched IM contour}.
%I'll refer to the approximation $\omega_z$ of $\pi$ as the {\em stitched IM contour}, since it's obtained by stitching together the $\alpha$-cut approximations.  

Computationally, the contour $\hat\pi_z$ in \eqref{eq:p2p} is far more efficient than that in \eqref{eq:pi.naive}.  The reason being that the Monte Carlo samples in the former are fixed whereas, in the latter, the samples generally depend on the $\theta$ at which the contour is being evaluated.  But is $\hat\pi_z$ is a good approximation of $\pi_z$?  Towards this, consider the genuine inner probabilistic approximation $\prior_z^\inn$ of $\uPi_z$ as described in Section~\ref{SS:insights}.  Then it's easy to see that 
\[ \prior_z^\inn\{ \pi_z(\Theta) \leq \pi_z(\theta) \} = \pi_z(\theta), \quad \theta \in \TT, \]
where the left-hand side above is the probability-to-possibility transform of $\prior_z^\inn$ with ranking function $r_z = \pi_z$.  So, with a proper choice of ranking function, it's theoretically possible to recover the original IM from this.  The downside is that, of course, if $\pi_z$ was available to serve as a ranking function, then there'd be no need for any of these approximations.  The same argument can be made with $\prior_z^\inn$ and $\pi_z$ in the above display replaced by $\prior_z^\star$ and $\pi_z^\out$, respectively, but the again the issue is that $\pi_z^\out$ isn't readily available.  So, it's unrealistic to expect that a stitched IM can exactly recover the original IM, but large-sample theory and my own experience suggest that using the likelihood-based ranking $r_z=L_z$ in \eqref{eq:p2p} is a good standard choice.

%That said, however, since the original IM is driven by the relative likelihood, it follows that the $\pi_z$-ranking and the $L_z$-ranking are very similar.  
%In fact, they are exactly the same if the relative likelihood $R(Z,\theta)$ is a pivot under $\prob_\theta$, which holds asymptotically for regular models.  So, there's a meaningful sense in which the $L_z$-ranking is preferred.
%, but it doesn't mean that the corresponding stitched IM in \eqref{eq:p2p} exactly matches that of the original IM.  

\subsection{Handling nuisance parameters}
\label{SS:nuisance}

It's often the case in applications that the full model parameter $\Theta$ is not the quantity of interest.  That is, one is primarily interested in some feature $\Phi = f(\Theta)$ of the full model parameter; for example, in analysis of variance, we often care mainly about the treatment effect parameters and not so much about the error variance.  \citet{martin.partial3} presents two different strategies for marginalization in the possibilistic IM framework: the {\em extension-based} strategy is computationally and conceptually simple but generally inefficient whereas the {\em profile-based} strategy involves some problem-specific tailoring but is generally more efficient.  Appendix~\ref{A:marginal} provides some details about these two marginalization strategies.  Importantly, both strategies guarantee validity, so the more efficient profiling strategy is recommended.  But use of the proposed inner probabilistic approximations adds a new twist to the marginalization story which I'll discuss below.  

The probabilistic approximation described above suggests an obvious marginalization strategy, namely, mapping the sample $\{\Theta_m: m=1,\ldots,M\}$ from $\prior_z^\star$ to a corresponding sample $\{\Phi_m: m=1,\ldots,M\}$ from the marginal distribution $\prior_z^{\star f}$, where $\Phi_m = f(\Theta_m)$.  From there, one could apply a probability-to-possibility transformation like in Section~\ref{SS:back.to.poss}.  This will be referred to as an {\em indirect approach}, in contrast to the direct approach described below.  While this indirect approximation is natural and simple to carry out, a relevant question is if it's approximating a valid marginal IM.  The answer to this question depends on the mapping $f$.  \citet[][Theorem~6]{reimagined} shows that if $f$ is boundary-preserving---i.e., the boundary of $f\{ C_\alpha(z) \}$ is the image under $f$ of the boundary of $C_\alpha(z)$ for each $z$ and $\alpha$, like for linear functions $f$---then $\prior_z^{\star f}$ is an inner probabilistic approximation of the extension-based marginal possibilistic IM for $\Phi$.  In such cases, the indirect approach described at the start of this paragraph provides an accurate approximation of the valid extension-based marginal IM for $\Phi$.  

The indirect approach need not be a bad solution when the mapping isn't boundary-preserving since, e.g., a non-boundary-preserving map $f$ might be approximately linear in the region where $\prior_z^\star$ is concentrated.  In general, however, one can't guarantee that this leads to an approximation of a valid marginal IM.  The point is that probabilistic marginalization through a non-linear function tends to be too concentrated and the resulting inferences are over-confident, i.e., confidence regions are too narrow, and no probability-to-possibility transform can correct this. 

A safer and generally more efficient marginalization strategy is available, but requires special tailoring to the specific feature $\Phi=f(\Theta)$ of interest.  First, construct the marginal IM possibilistic IM using either the extension- or profiling-based strategies described in Appendix~\ref{A:marginal}; of course, since profiling is more efficient, that would be my recommended strategy.  Second, construct the corresponding inner probabilistic approximation to the marginal possibilistic IM.  This is what I refer to as the {\em direct approach}---``direct'' in the sense that the approximation is applied to the marginal IM for the specific quantity of interest.  The point is that the indirect approach, which relies on probabilistic marginalization, is easy but not guaranteed to be reliable; to ensure reliability, the direct approach instead carries out possibilistic marginalization, which is provably reliable, and then applies the probabilistic approximation to the marginal IM.

\section{Low-dimensional illustrations}
\label{S:examples}

I'll start with some relatively simple low-dimensional examples where the proposed approximation can be visualized and, at least in some cases, the naive computation in \eqref{eq:pi.naive} can be carried out and the proposed approximation's accuracy can be visually assessed.  A higher-dimensional example is presented in Section~\ref{S:application} below.  
%R code for Example~\ref{ex:gamma} below is available at \url{https://www4.stat.ncsu.edu/~rgmarti3/software}; some more versatile and user-friendly software is currently in the works. 

\begin{ex}
\label{ex:gamma}
Let $Z=(Z_1,\ldots,Z_n)$ denote an iid sample from a gamma distribution with unknown parameter $\Theta=(\Theta_1,\Theta_2)$, where $\Theta_1 > 0$ and $\Theta_2 > 0$ are the unknown shape and scale parameters, respectively.  It is straightforward to get the naive approximation \eqref{eq:pi.naive} at any particular value of the parameter, but prohibitively expensive to carry this out over a sufficiently fine grid that spans the plausible pairs $(\theta_1,\theta_2)$.  So, there's a need for more computationally efficient approximation methods, and the Monte Carlo strategy of Section~\ref{S:mcim} above fits the bill.  For this illustration, I'll work with the data presented in Example~3 of \citet{fraser.reid.wong.1997}, which consists of the survival time (in weeks) for $n=20$ rats exposed to a certain amount of radiation, modeled by a gamma distribution with unknown shape and scale parameters.  Figure~\ref{fig:gamma.hist} summarizes the 5000 samples of $(\log\Theta_1, \log\Theta_2)$ from the proposed probabilistic approximation $\prior_z^\star$ developed in Section~\ref{S:mcim}.  This reveals that the distribution is at least approximately Gaussian.  Figure~\ref{fig:gamma.pl}(a) shows the exact joint contour function for $(\Theta_1,\Theta_2)$ along with two stitched IM approximations: one takes the ranking function to be a bivariate Gaussian density function and the other takes it to be the gamma likelihood function.  Notice that the likelihood-ranking-based approximation almost perfectly matches the exact contour.  For comparison, the exact contour requires more than 10 minutes of computation time, while the new Monte Carlo approximation is completed in just a few seconds.  

A practically relevant and surprisingly challenging follow-up question concerns inference on the mean $\Phi = \Theta_1\Theta_2$ of the gamma distribution.  IM solutions that offer exactly valid inference on $\Phi$ are presented in \citet{immarg} and \citet{martin.partial3}, but the relevant computations can be a burden.  Section~\ref{SS:nuisance} describes two approaches---direct and indirect---for carrying out this marginalization, and Figure~\ref{fig:gamma.pl}(b) shows the two corresponding contours for $\Phi$, along with the exact profile-based marginal IM contour.   Clearly, both the direct and indirect approaches are quite accurate in this case and, indeed, the direct approach is more accurate overall, with the indirect approach's contour being a bit too narrow compared to the exact contour.  This narrowness is inevitable because the marginal distribution for the mean $\Phi$ derived from $\prior_z^\star$ is not an inner probabilistic approximation of a valid marginal IM.  But what the indirect approach might lack in accuracy, it makes up for in simplicity. 
%An obvious idea is that, given a sample of $(\log\Theta_1, \log\Theta_2)$ values from the probabilistic approximation, it's straightforward to get a corresponding sample of $\Phi = \exp(\log\Theta_1 + \log\Theta_2)$ values, but would this provide a good approximation of the marginal IM for $\Phi$ presented in \citet{martin.partial3}?  To answer this question, Figure~\ref{fig:gamma.pl}(b) shows the latter exact, profile likelihood-based, marginal IM possibility contour for the mean $\Phi$ and two approximations thereof:
%\begin{itemize}
%\item an {\em indirect} approximation based on the samples of $(\log\Theta_1,\log\Theta_2)$ from $\prior_z^\star$ as described above, using the profile likelihood as the ranking function, and 
%\vspace{-2mm}
%\item a {\em direct} approximation wherein the mean $\Phi$ is sampled from the inner probabilistic approximation of the target marginal IM.
%\end{itemize} 
%From the context alone, the reader should expect that the direct approximation would be superior to the indirect approximation, but this superiority doesn't come for free: the former is tailored to the particular choice of feature $\Phi$ and, therefore, is not as simple to obtain as the latter.  From the plot in Figure~\ref{fig:gamma.pl}(b), it's clear that 
\end{ex}

\begin{figure}[t]
\begin{center}
\subfigure[Samples from $\prior_z^\star$]{\scalebox{0.4}{\includegraphics{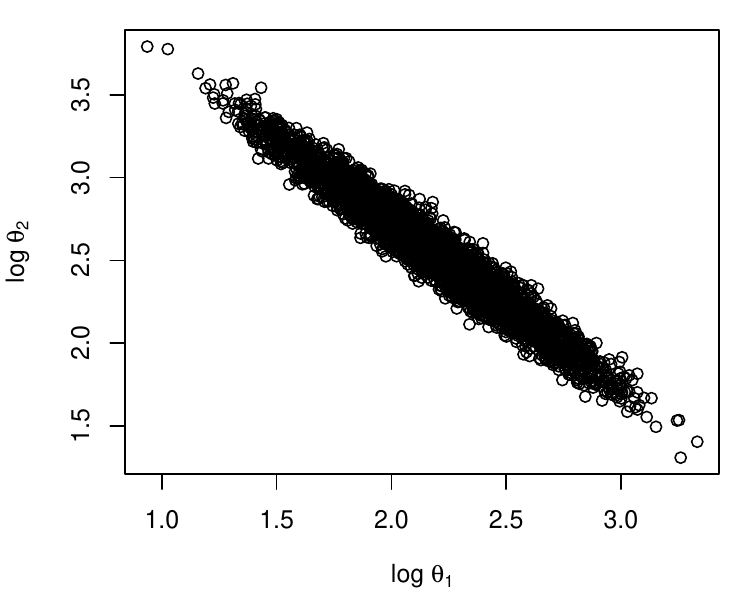}}}
\subfigure[Histogram of $\log\Theta_1$]{\scalebox{0.4}{\includegraphics{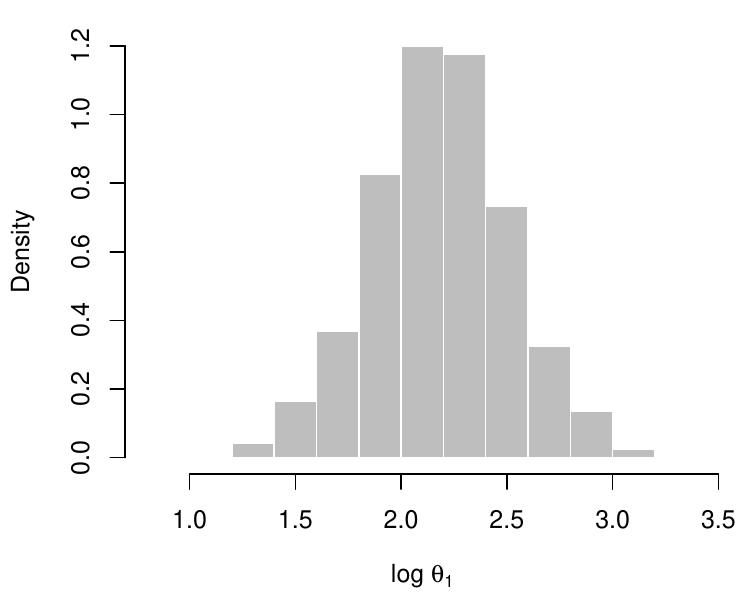}}}
\subfigure[Histogram of $\log\Theta_2$]{\scalebox{0.4}{\includegraphics{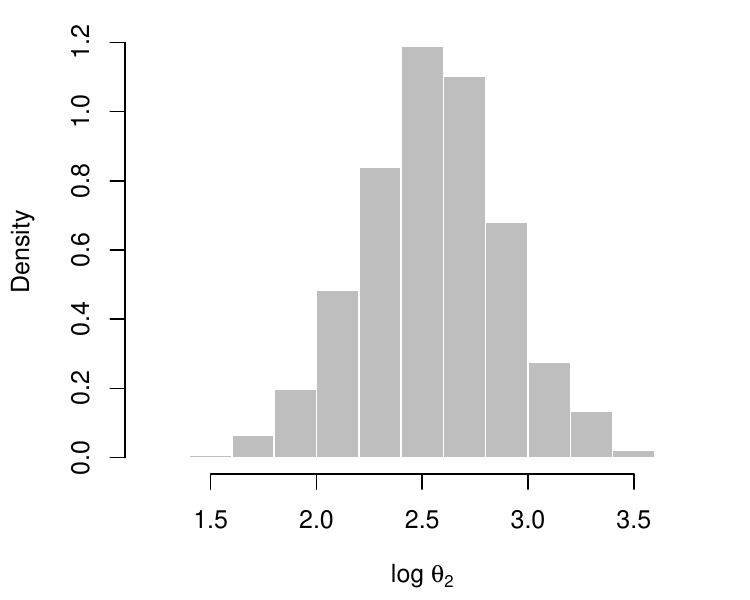}}}
\end{center}
\caption{Summary of the samples $(\log\Theta_1,\log\Theta_2)$ from $\prior_z^\star$ in Example~\ref{ex:gamma}.}
\label{fig:gamma.hist}
\end{figure} 

\begin{figure}[t]
\begin{center}
\subfigure[Joint contours for $(\Theta_1,\Theta_2)$]{\scalebox{0.55}{\includegraphics{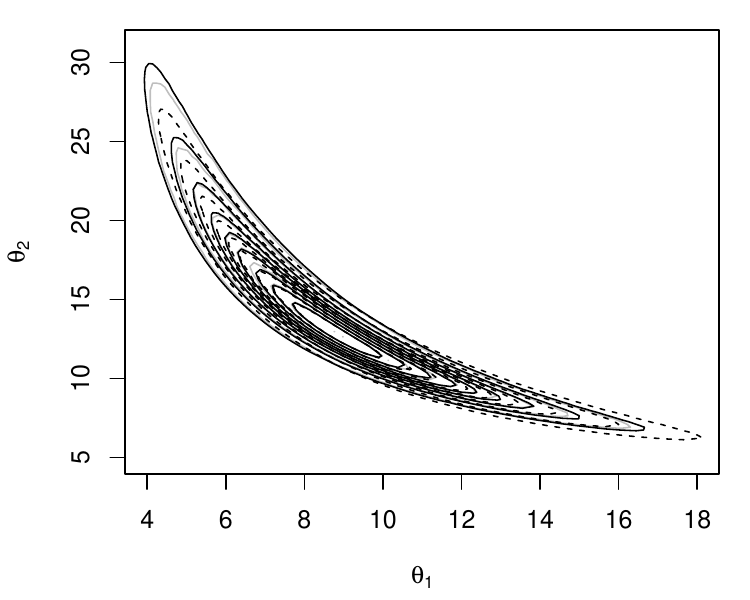}}}
\subfigure[Marginal contours for the mean $\Theta_1\Theta_2$]{\scalebox{0.55}{\includegraphics{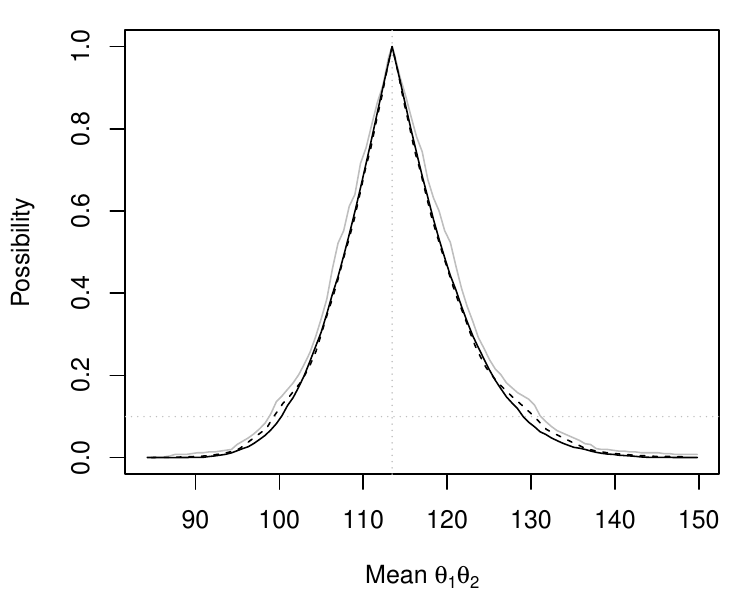}}}
\end{center}
\caption{Results for gamma model fit in Example~\ref{ex:gamma}.  Panel~(a) shows the exact IM contour (gray) for the gamma shape and scale parameters $(\Theta_1,\Theta_2)$ and two approximations: Gaussian density-based ranking (dashed) and likelihood-based ranking (solid).  Panel~(b) shows the exact marginal IM contour (gray) for the gamma mean and two approximations as described in the text: direct (dashed) and indirect (solid).}
\label{fig:gamma.pl}
\end{figure}

\begin{ex}
\label{ex:logistic}
The data presented in Table~8.4 of \citet{ghosh-etal-book} concerns the relationship between exposure to chloracetic acid and mouse mortality. A simple logistic regression model can be fit to relate the binary death indicator ($y$) with the levels of exposure ($x$) to chloracetic acid for the dataset's $n=120$ mice.  That is, data $Z$ consists of pairs $Z_i = (X_i, Y_i)$, for $i=1,\ldots,n$, and a conditionally Bernoulli model for $Y_i$, given $X_i$, with mass function 
\[ p_\theta(y \mid x) = F(\theta_1 + \theta_2 x)^y \, \{1 - F(\theta_1 + \theta_2 x)\}^{1-y}, \quad \theta=(\theta_1, \theta_2) \in \RR^2, \]
where $F(u) = (1 + e^{-u})^{-1}$ is the logistic distribution function.  The corresponding likelihood cannot be maximized in closed-form, but this is easy to do numerically, and the maximum likelihood estimator and the corresponding observed information matrix lead to the asymptotically valid inference reported by standard statistical software.  
%Figure~\ref{fig:logistic.data} shows the data and the fitted ``death probability'' curve.  
For exact inference, however, the computational burden is heavier: evaluating the exact IM contour over a sufficiently fine grid of $\theta$ values is prohibitively expensive.  As an alternative, the Monte Carlo sampling method presented in Section~\ref{S:mcim} is easy to implement and runs in a matter of seconds.  Figure~\ref{fig:logistic}(a) shows the 5000 Monte Carlo samples of $(\Theta_1,\Theta_2)$ from $\prior_z^\star$ along with the stitched IM contour based on a Gaussian density ranking, closely agreeing with the approximations in \citet{imvar.ext} and \citet{imbvm.ext}.  

A specific and practically relevant question concerns the chloracetic acid exposure level required to make the death probability 0.5.  This amounts to setting $F(\theta_1 + \theta_2 x) = 0.5$ and solving for $x$; the solution is $\lambda = -\theta_1/\theta_2$, and the true value $\Lambda = -\Theta_1/\Theta_2$ is called the median lethal dose, or LD50.  For inference on $\Lambda$, I use the previously-obtained samples $(\Theta_1,\Theta_2)$ from $\prior_z^\star$ and use a Gaussian density ranking to construct and approximate possibility contour for $\Lambda$.  This curve is shown in Figure~\ref{fig:logistic}(b) and, again, it closely agrees with the asymptotic approximations in \citet{imbvm.ext}. 
\end{ex}

%\begin{figure}[t]
%\begin{center}
%\scalebox{0.6}{\includegraphics{acid}}
%\end{center}
%\caption{Plot of the data described in Example~\ref{ex:logistic} with the fitted ``death probability'' curve. The vertical line at $\hat\lambda_z=0.244$ marks the estimated {\em median lethal dose} (LD50), which is the exposure level at which the death probability is 0.5.}
%\label{fig:logistic.data}
%\end{figure} 

\begin{figure}[t]
\begin{center}
\subfigure[Joint contour for $(\Theta_1,\Theta_2)$]{\scalebox{0.55}{\includegraphics{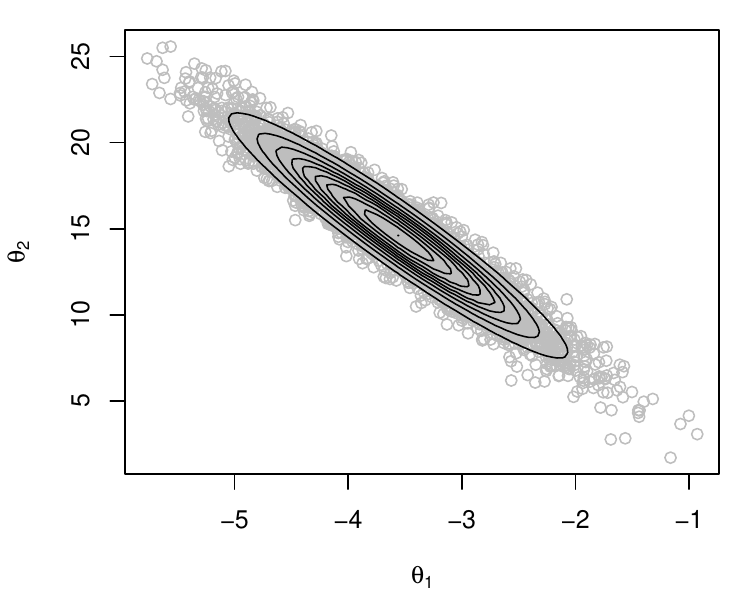}}}
\subfigure[Marginal contour for $\Lambda$]{\scalebox{0.55}{\includegraphics{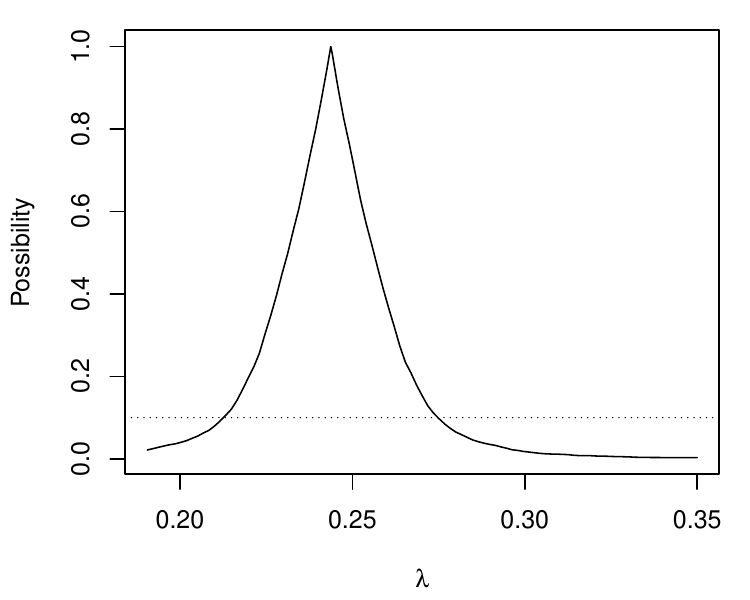}}}
\end{center}
\caption{Summary of the results in Example~\ref{ex:logistic}. Panel~(a) shows the samples of $(\Theta_1,\Theta_2)$ from $\prior_z^\star$ and the corresponding approximate possibility contour. Panel~(b) shows the approximate possibility contour for the median lethal dose $\Lambda$.}
\label{fig:logistic}
\end{figure} 

\begin{ex}
\label{ex:censored}
The examples above are non-trivial when it comes to exact (marginal) inference, but they still fall under the umbrella of ``standard'' models.  As model complexity increases, obviously so too does the computational burden of exact inference.  One example of a relatively complex model is that required for modeling censored data: on the one hand, data from the population under investigation are corrupted, which adds complexity, and, on the other hand, little is known about the corruption process so this should be modeled nonparametrically, which also adds complexity.  More specifically, I consider time-to-death data on ovarian cancer patients from a clinical trial that took place from 1974 to 1977 \citep{edmonson.etal.1979}; this data is contained in the {\tt ovarian} data set in the R package {\tt survival} \citep{survival-package}. Among other things, this data set includes survival times for $n=26$ patients that entered the study with Stage II or IIIA cancer and were treated with either cyclophosphamide alone or cyclophosphamide with adriamycin. Of these patients, 14 survived to the end of the study---so their survival times were right censored---and 12 unfortunately died.  In such cases, it's common to have a parametric model for the survival times, which is adjusted in a nonparametric way to accommodate censoring.  Let $Y_i$ denote the actual survival time of patient $i$, which may or may not be observed.  Assign these survival times a statistical model $\{\prob_\theta: \theta \in \TT\}$, where the true-but-unknown value $\Theta$ of the model parameter is the target.  Let $C_i$ denote the censoring time for patient $i$, which is a random variable.  Then, under the assumption of random right censoring, the observed data $Z$ consists of $n$ iid pairs $Z_i = (X_i, T_i)$, where 
\begin{equation}
\label{eq:censoring.rule}
X_i = \min( Y_i, C_i ) \quad \text{and} \quad T_i = 1(Y_i \leq C_i), \quad i=1,\ldots,n,
\end{equation}
where $T_i=1$ if the observation is an event time and $T_i=0$ if it's a censoring time.  The goal is to infer the unknown $\Theta$, but with the censoring-corrupted survival times.  The likelihood function for the observed data is 
\[ L_{z}(\theta, G) = \prod_{i=1}^n g(x_i)^{1-t_i} \, \{1 - G(x_i)\}^{t_i} \times \prod_{i=1}^n p_\theta(x_i)^{t_i} \{ 1 - P_\theta(x_i) \}^{1-t_i}, \]
which depends on both the generic value $\theta$ of the true unknown model parameter $\Theta$ for the concentrations and on the generic value $G$ of the true unknown censoring level distribution $\mathsf{G}$.  In the above expression, $g$ and $p_\theta$ are density functions for the censoring and concentration distributions, and $G$ and $P_\theta$ are the corresponding distribution functions.  

Following the review in Section~\ref{S:background} and details in Appendix~\ref{A:marginal}, given that $\mathsf{G}$ is a nuisance parameter, the natural strategy is to work with a relative profile likelihood,  
\[ R^\text{\sc pr}(z, \theta) = \frac{\prod_{i=1}^n p_\theta(x_i)^{t_i} \{ 1 - P_\theta(x_i) \}^{1-t_i}}{\prod_{i=1}^n p_{\hat\theta_z}(x_i)^{t_i} \{ 1 - P_{\hat\theta}(x_i) \}^{1-t_i}}, \quad \theta \in \TT, \]
where $\hat\theta_z$ is the maximum likelihood estimator of $\Theta$.  The {\em distribution} of the relative profile likelihood still depends on the nuisance parameter $\mathsf{G}$, so when we define the possibilistic IM contour by validifying the relative profile likelihood, we get 
\[ \pi_z(\theta) = \sup_G \prob_{\theta, G} \bigl\{ R^\text{\sc pr}(Z, \theta) \leq R^\text{\sc pr}(z, \theta) \bigr\}, \quad \theta \in \TT. \]
\citet{imcens} proposed a novel strategy wherein a variation on the Kaplan--Meier estimator \citep[e.g.,][]{kaplanmeier, km.book} is used to obtain a $\widehat G$, and then the contour above is approximated by 
\begin{equation}
\label{eq:censored.boot.pl}
\check\pi_z(\theta) = \prob_{\theta, \widehat G} \bigl\{ R^\text{\sc pr}(Z, \theta) \leq R^\text{\sc pr}(z, \theta) \bigr\}, \quad \theta \in \TT. 
\end{equation}
Evaluation of the right-hand side via Monte Carlo boils down to sampling censoring levels from $\widehat G$, sampling concentration levels from $\prob_\theta$, and then constructing new data sets according to \eqref{eq:censoring.rule}.  While this procedure is conceptually simple, naive implementation over a sufficiently fine grid of $\theta$ values is very expensive.  Fortunately, the proposed Monte Carlo strategy can be readily applied to sample from a probabilistic approximation of $\check\pi_z$, from which approximately valid inference on $\Theta$ can be obtained.  While only first-order asymptotic validity of this IM could be established in \citet{imcens}, the empirical results presented there are strikingly accurate.
%quite striking in that they suggest a higher-order accuracy similar to that associated with bootstrap.  

As is common in the time-to-event data analysis literature, I'll take a Weibull model for the survival times, where the density function is 
\[ p_\theta(y) = (\theta_1/\theta_2) \, (y / \theta_2)^{\theta_1 - 1} \, \exp\{ -(y / \theta_2)^{\theta_1} \}, \quad y > 0, \]
where $\theta=(\theta_1,\theta_2)$ is the unknown parameter, with $\theta_1 > 0$ and $\theta_2 > 0$ the shape and scale parameters, respectively.  Figure~\ref{fig:weibull.pl} shows the samples of $(\Theta_1,\Theta_2)$ from the inner probabilistic approximation $\prior_z^\star$ of the IM defined by \eqref{eq:censored.boot.pl}.  This plot also shows two approximate contours for $\Theta$: one is based solely on the asymptotic Gaussianity of the possibilistic IM as demonstrated in \citet{imbvm.ext} and the other is based on the samples from $\prior_z^\star$ and the Gaussian density ranking.  The former asymptotic contour assumes ``$n \approx \infty$'' which is difficult to justify with $n=26$ observations, so the latter contour, which is more diffuse in this application, offers more trustworthy inference.  It's also very similar to the results presented in \citet{imcens}, but with only a tiny fraction of the computational cost.  

\begin{figure}[t]
\begin{center}
\scalebox{0.6}{\includegraphics{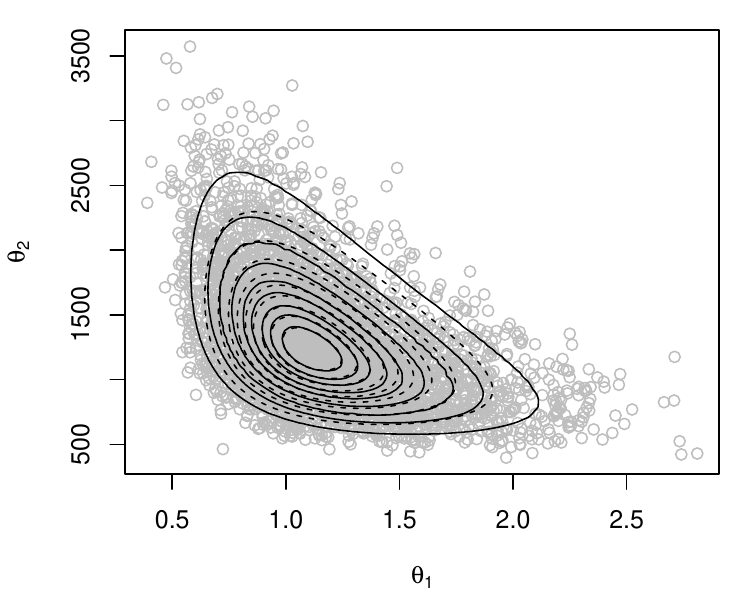}}
\end{center}
\caption{Plot of samples $(\Theta_1,\Theta_2)$ of the Weibull shape and scale parameters from the inner probabilistic approximation $\prior_z^\star$ along with two corresponding joint contours: Gaussian density ranking (solid) and the asymptotic Gaussian possibility contour (dashed).}
\label{fig:weibull.pl}
\end{figure} 

While the Weibull model is common in applications, the Weibull model parameters themselves are difficult to interpret.  A relevant and interpretable feature is the mean, which is given by $\Phi = \Theta_2 \, \Gamma(1 + \Theta_1^{-1})$, where $\Gamma$ is the usual gamma function.  Like in the gamma and logistic regression examples above, it is straightforward to obtain an approximate marginal IM for the relevant feature $\Phi$ or, equivalently $\log\Phi$, based on the samples of $(\Theta_1,\Theta_2)$ from $\prior_z^\star$.  Figure~\ref{fig:weibull.mpl}(a) shows a histogram of the samples of $\log\Phi$ and, since there's a slight sign of asymmetry, I use the kernel density estimate-based ranking function to construct the marginal contour for $\log\Phi$ in Figure~\ref{fig:weibull.mpl}(b).  From here, a nominal 90\% confidence interval for $\log\Phi$ can be immediately read off, i.e., $(6.41, 7.74)$.  Exponentiating the endpoints gives a 90\% confidence interval for $\Phi$.
%the mean survival time $\Phi$ as $(610.7, 2309.0)$.  

%Marginal 90\% confidence interval for the log-mean is (6.414563, 7.744572); exponentiate endpoints to get corresponding interval for the mean: (610.6741, 2309.0039)

\begin{figure}[t]
\begin{center}
\subfigure[Samples of $\log\Phi$]{\scalebox{0.55}{\includegraphics{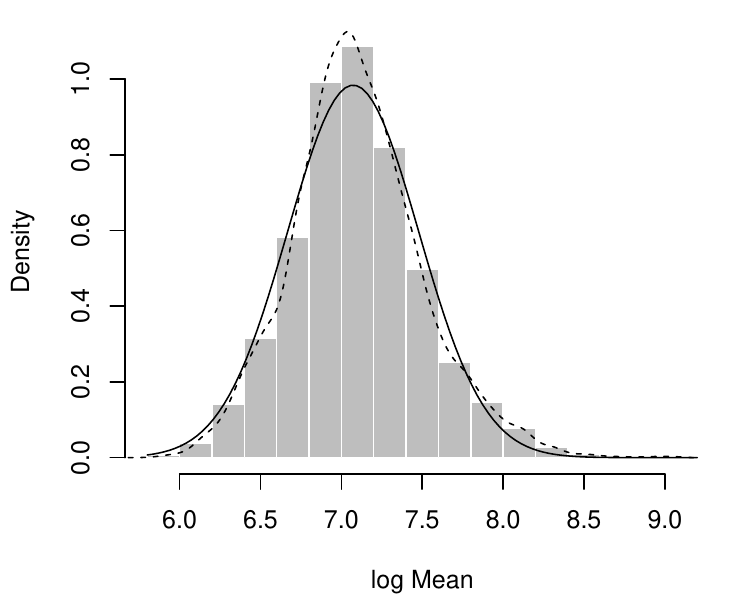}}}
\subfigure[Marginal contour for $\log\Phi$]{\scalebox{0.55}{\includegraphics{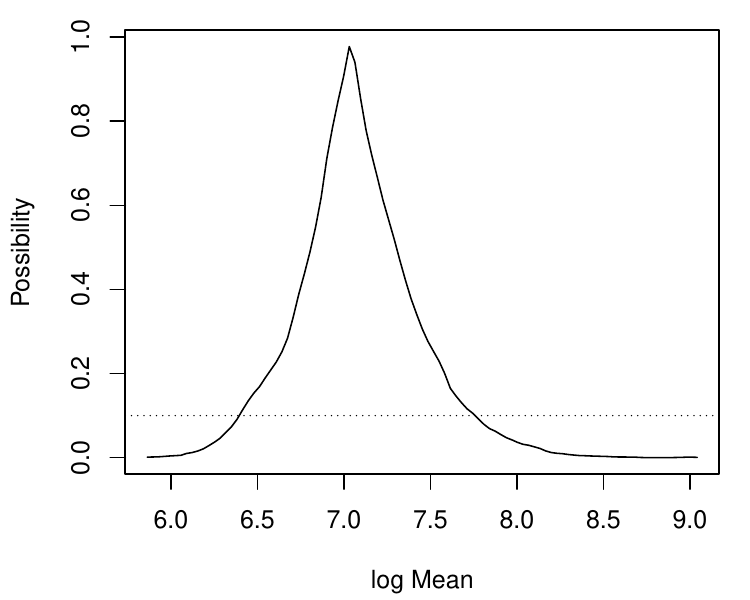}}}
\end{center}
\caption{Results on the mean survival time $\Phi = \Theta_2 \, \Gamma(1 + \Theta_1^{-1})$ of the censored Weibull model.  Panel~(a) shows the distribution of $\log\Phi$ under $\prior_z^\star$ and Panel~(b) shows the corresponding marginal IM contour for $\log\Phi$.}
\label{fig:weibull.mpl}
\end{figure} 
\end{ex}

\section{Simulated-data comparisons}
\label{S:simulations}

Numerous examples in which the IM solution compares favorably to existing methods can be found in the literature; see, e.g., \citet{imvch}, \citet{imunif}, \citet{imconformal.supervised}, and the references cited in Sections~\ref{S:intro}, \ref{S:background}, and \ref{S:examples}.  But the proposed approximation is at least nominally different than the IM solutions that have been previously investigated, so it's of interest to assess how it compares to Bayesian and likelihood-based solutions.  Since what's being proposed here is a general approximation, I want to focus my comparisons on ``standard'' models for which the general proposal is expected to work well off-the-shelf.  For the sake of brevity, I'll focus on those ``standard'' models already introduced in Section~\ref{S:examples}, namely, the gamma and censored Weibull models; I'll revisit the logistic regression example in Section~\ref{S:application}.  Since these are ``standard'' models, the corresponding standard Bayesian and likelihood-based solutions will perform reasonably well, so the (exact or approximate) IM solutions can't perform significantly better.  The take-away message here is that, for these two ``standard'' examples, the proposed approximate IM solutions perform as well---if not slightly better---than the existing Bayesian and likelihood-based solutions, which themselves are very good.  Follow-up papers will consider more complex settings with model-specific improvements to the general approximation proposed here and will conduct more thorough comparisons.  

\begin{ex1}[cont]
For the gamma model, I compare the approximate IM solution to the Bayesian solution with Jeffreys prior and the likelihood-based solution (i.e., first-order asymptotic normality of the maximum likelihood estimator) in terms of coverage probability of the joint credible/confidence sets for $\Theta = (\Theta_1, \Theta_2)$, the gamma shape and scale.  In this simulation study, I take 1000 independent samples of size $n=20$ with true $\Theta_1=7$ and $\Theta_2=3$.  Figure~\ref{fig:sims}(a) shows the coverage probability of the three solutions as a function of the nominal/target confidence level.  All three solutions have actual coverage that nearly matches the target coverage, with both the IM and likelihood-based solutions being a bit closer to the target than the Bayes solution over a range of practical levels.  
\end{ex1}

\begin{ex3}[cont]
For the censored Weibull model, again I compare the approximate IM solution to the Bayesian solution with Jeffreys prior and the first-order likelihood-based solution in terms of coverage probability of credible/confidence sets for $\Theta=(\Theta_1,\Theta_2)$, the Weibull shape and scale.  I take 1000 samples of size $n=20$ from the Weibull distribution with $\Theta_1=2$ and $\Theta_2=3$, but subject to Type~I right censoring with respect to the $\unif(0,4)$ distribution.  Naturally, the censoring creates and additional challenge in that the IM solution described in Example~\ref{ex:censored}---which relies on a plug-in estimate of the censoring distribution---is only asymptotically valid.  So, it's not clear what to expect of the proposed approximation of an only approximately valid IM.  Figure~\ref{fig:sims}(b) shows the coverage probability of the three solutions as a function of the nominal/target confidence level.  Again, all three solutions are high-quality, but my proposed IM approximation has coverage probability at or slightly above the target level across virtually the entire range.  
\end{ex3}

\begin{figure}[t]
\begin{center}
\subfigure[Gamma model]{\scalebox{0.55}{\includegraphics{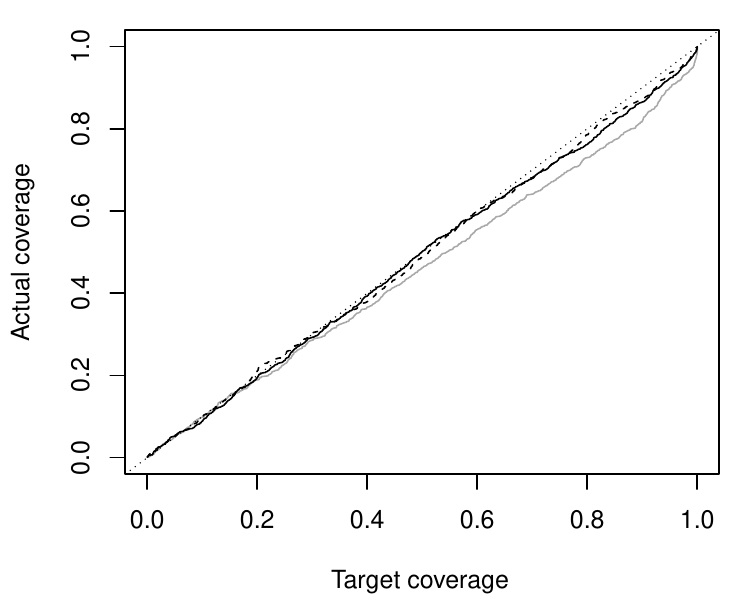}}}
\subfigure[Censored Weibull model]{\scalebox{0.55}{\includegraphics{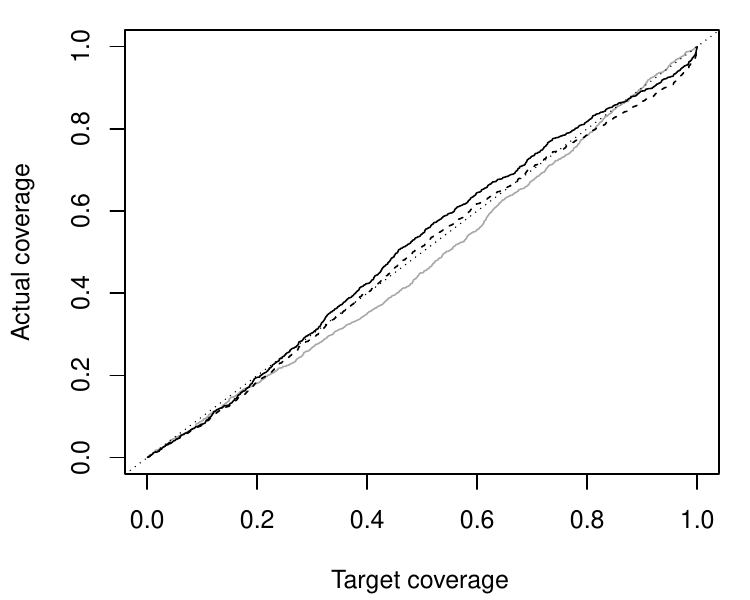}}}
\end{center}
\caption{Actual coverage probability versus the nominal/target level for three different solutions: Jeffreys prior Bayes (gray), first-order likelihood (dashed), and the proposed approximation to the possibilistic IM (solid).  Diagonal dotted line is for reference.}
\label{fig:sims}
\end{figure}

\section{A higher-dimensional illustration}
\label{S:application}

Finally, I present a higher-dimensional version of the logistic regression problem in Example~\ref{ex:logistic}.  Specifically, I consider the Pima Indians diabetes data set, widely used as a benchmark for binary classification, where the goal is to relate incidence of diabetes in Pima Indian women to eight other concomitant variables, namely, number of pregnancies, glucose concentration, blood pressure, skin thickness, insulin, BMI, diabetes pedigree function, and age.  The original data set contains 768 cases.  There are, however, a number of missing values, as indicated in the {\tt PimaIndiansDiabetes2} version \citep[see the R package {\tt mlbench}][]{mlbench}, and, after removing those, I'm left with $n=392$ complete cases.  Compared to the previous examples, this one is relatively high-dimensional: parameter space $\TT$ is of dimension $d=9$.  

The proposed possibilistic IM approximation can easily handle problems of this magnitude, whereas such an example would be completely out of practical reach using the naive computational strategy advocated for in previous IM papers.  A plot summarizing the nine-dimensional inner probabilistic approximation $\prior_z^\star$ is shown in Appendix~\ref{A:numerical} but here, for ease of visualization, Figure~\ref{fig:pima} shows two marginal IM contours: one for the linear predictor $\Phi = \Theta^\top x_\text{new}$ and one for the corresponding diabetes probability $\Psi = \text{logit}^{-1}(\Phi)$, where $x_\text{new}$ is the vector corresponding to the componentwise average of the eight features; note that this summary is based on the indirect marginalization strategy, utilizing the appropriate transformations of the samples of $\Theta$ from the original inner probabilistic approximation $\prior_z^\star$.  The triangles in the two plots denote the likelihood-based asymptotic confidence limits, which are both valid and efficient in relatively large samples like available here.  So, the fact that the IM's confidence limits exactly agree with these is a sign that (a)~the proposed approximation is accurate and (b)~that there's no loss of reliability here despite my use of the simple indirect marginalization strategy.  

% https://ivyclare.github.io/Logistic-Regression-on-Pima-Dataset/

% some features have missing values, removed those cases: $n=392$
% https://pmc.ncbi.nlm.nih.gov/articles/PMC8306487/

\begin{figure}[t]
\begin{center}
\subfigure[Marginal contour for $\Phi=\Theta^\top x_\text{new}$]{\scalebox{0.55}{\includegraphics{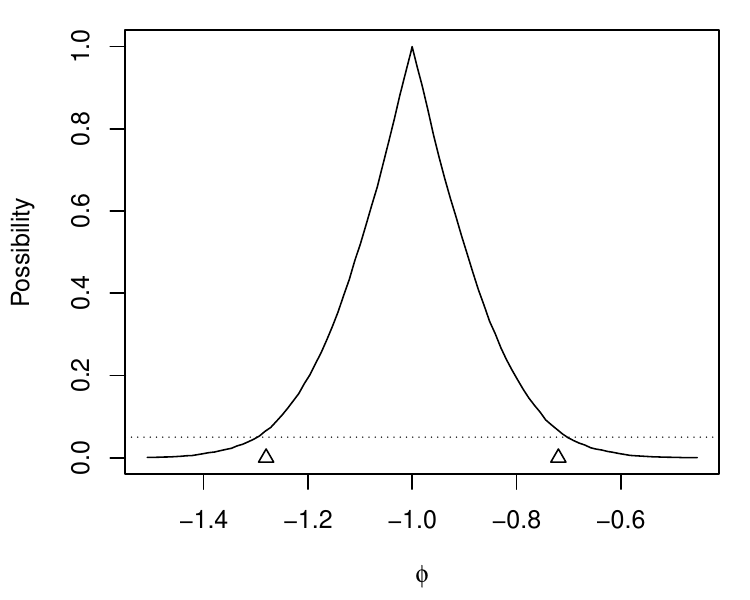}}}
\subfigure[Marginal contour for $\Psi=\text{logit}^{-1}(\Phi)$]{\scalebox{0.55}{\includegraphics{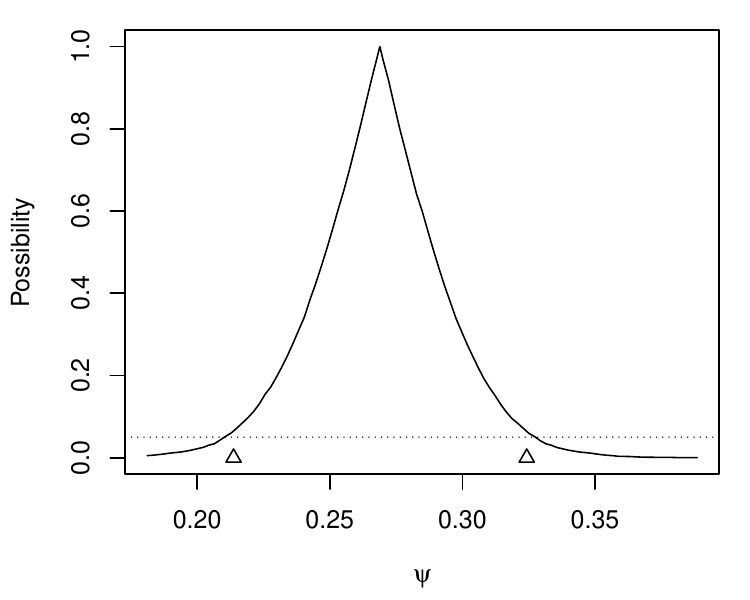}}}
\end{center}
\caption{Marginal contour for the linear predictor $\Phi = \Theta^\top x_\text{new}$ and for the corresponding probability $\Psi = \text{logit}^{-1}(\Phi)$ of a patient with feature vector $x_\text{new}$ having diabetes. Triangles denote the 95\% likelihood-based asymptotic confidence limits.}
\label{fig:pima}
\end{figure}

%\section{Application}
%\label{S:application}

\section{Conclusion}
\label{S:discuss}

According to \citet{cui.hannig.im}, the IM framework is a fundamental advancement in statistical inference.  To date, these advances have mostly been foundational, theoretical, and methodological, with practically important computational advances unfortunately dragging behind.  The present paper breaks that trend by offering new, simple, and computationally Monte Carlo sampling-driven procedures for approximating various summaries of the IM's possibilistic output.  The non-trivial numerical examples presented here highlight the accuracy of the proposed Monte Carlo approximations which are achieved at just a tiny fraction---seconds versus minutes---of the time required to evaluate the IM output directly using naive, brute-force strategies.  

The key innovation is to stitch together a collection of simple but individually-inadequate approximations into a superior meta-approximation.  Here, for the individual approximations, I'm using the strategy advanced in \citet{imvar.ext}, rather than the bootstrap-based strategy offered in \citet{calibrated.boostrap}, because the former is computationally simpler than the latter.  There may, however, be applications in which the additional flexibility offered by a bootstrap-based strategy is worth the extra computational costs.  This question/topic will be investigated elsewhere.

The results here mark a first and important step in a series of developments leading to an IM toolbox for more-or-less off-the-shelf use by practitioners.  The obvious next step is the development of general-purpose software for those common models---like the ones in the above examples---often encountered in applications.  A natural next step would be extending the methods proposed here in various directions.  This includes extensions to general non- and semiparametric models, where there may not be a likelihood function \citep[e.g.,][]{cella.martin.imrisk}, which is well within reach.  This would also include problems that involve ``model uncertainty,'' i.e., where all or part of what's unknown and to be learned is the underlying model structure.  A good example of this is mixture models, where the number of mixture components is unknown.  Structure learning problems often require regularization, since the data alone cannot rule out overly-complex models, so this leads naturally into extensions of the proposed computational framework to accommodate cases with what \citet{martin.partial, martin.partial2} has referred to as ``partial prior information.''  Finally, while the new proposal here, and the ideas in \citet{imvar.ext} and elsewhere that it's built on, aren't specific to models involving low-dimensional unknowns, there are sure to be challenges associated with scaling up this proposal to handle high-dimensional problems.  My current suggestion is to focus this effort on improving the efficiency and/or flexibility of the variational approximation in Section~\ref{SS:computation}.

\section*{Acknowledgments}

This work is partially supported by the U.S.~National Science Foundation, under grants SES--2051225 and DMS--2412628.  Thanks to the anonymous reviewers for their feedback on an earlier version, and to Emily Hector, Chuanhai Liu, Sahil Patel, and James Robertson for helpful conversations and some  valuable assistance. 

% ADD LATER: Thanks to Chuanhai and Emily H for comments

\appendix

\section{Possibility theory primer}
\label{A:primer}

Possibility measures \citep[e.g.,][]{dubois.prade.book, dubois2006} are among the simplest forms of imprecise probability, closely linked to fuzzy set theory \citep[e.g.,][]{zadeh1978} and Dempster--Shafer theory \citep[e.g.,][]{shafer1976, shafer1987}.  Probability and possibility theory differ philosophically and mathematically, but here I'll only briefly discuss the latter.  

The mathematical differences between probability and possibility theory can be succinctly summarized as follows: 
\begin{quote}
{\em Optimization is to possibility theory what integration is to probability theory.}
\end{quote}
That is, a possibility measure $\uPi$ defined on a space $\ZZ$ is determined by a function $\pi: \ZZ \to [0,1]$ with the property that $\sup_{z \in \ZZ} \pi(z) = 1$.  This function is called the {\em possibility contour} and the supremum-equals-1 property is a normalization condition akin to the integral-equals-1 property satisfied by probability densities.  Then, naturally, the possibility measure $\uPi$ is determined by optimizing its contour, i.e., $\uPi(A) = \sup_{z \in A} \pi(z)$, for any $A \subseteq \ZZ$, just like a probability measure is determined by integrating its density.  

The possibility calculus just described is conceptually straightforward and easy to analyze but, of course, there may be computational challenges associated with implementing the theory---just like with probability.  Moreover, this optimization-centric framework of uncertainty quantification is no less rigorous than the familiar probability theory based on (Lebesgue) integration.  To justify this latter claim, I'll make the following two points.  

First, and of particular relevance for the developments in this paper, the aforementioned supremum-equals-1 normalization condition ensures that $\uPi$ is a coherent upper probability \citep[e.g.,][]{cooman.poss1, cooman.aeyels.1999, walley1997} in the spirit of \citet{walley1991} and others.  Among other things, this means that $\uPi$ determines a non-empty (closed and convex) set of ordinary probabilities that it dominates:
\begin{equation}
\label{eq:credal}
\cred(\uPi) = \{ \prior \in \text{probs}(\ZZ): \prior(H) \leq \uPi(H) \text{ for all measurable $H$}\}, 
\end{equation}
where $\text{probs}(\ZZ)$ is the set of probabilities supported on the Borel $\sigma$-algebra of measurable subsets in $\ZZ$. The set $\cred(\uPi)$ is called the {\em credal set} and all (coherent) upper probabilities have one.  Aside from being relatively simple, a key advantage to possibilistic uncertainty quantification is that the associated credal set has a statistically oriented characterization, which I discuss in the main paper. 

Second, one can extend possibilistic evaluations beyond just the possibility values assigned to hypotheses.  This extension operation is analogous to the evaluation of expected values via Lebesgue integration of measurable functions.  This extension relies on what's called {\em Choquet integration}, originally developed in \citet{choquet1953}, thoroughly analyzed in  \citet{lower.previsions.book} and elsewhere, with applications in statistics \citep[e.g.,]{huber1973.capacity}.  Note that Choquet integration is not specifically tied to possibility theory, it's for integration with respect to general non-additive capacities, and possibility measures are a special case; there are, however, some nice simplifications that happen because of the {\em maxitive}---as opposed to additive---form of possibility measures.  Suppose that $h: \TT \to \RR$ is a non-negative function; non-negativity is just for convenience here.  Then the {\em Choquet integral} of $h$ with respect to the possibility measure $\uPi$ is given by
\begin{equation}
\label{eq:choquet}
\uPi \, h := \int_{\inf h}^{\sup h} \Bigl\{ \sup_{\theta \in \TT: h(\theta) > s} \pi(\theta) \Bigr\} \, ds, 
\end{equation}
where $\pi$ is the contour function associated with $\uPi$.  As an important special case, connecting the possibilistic marginalization strategies discussed in Appendix~\ref{A:marginal}, suppose that $h(\theta) = 1(\theta \in H)$, where $H = \{\theta: k(\theta) \in K\}$ for some function $k$ and some subset $K \subseteq k(\TT)$.  Then applying the Choquet integral formula in \eqref{eq:choquet} gives 
\[ \uPi \, h = \sup_{\theta: h(\theta)=1} \pi(\theta) = \sup_{\theta \in H} \pi(\theta) = \sup_{\kappa \in K} \sup_{\theta: k(\theta)=\kappa} \pi(\theta). \]
There's a clear analogy that can be made between the right-hand side of the above display and the more familiar Bayesian calculus: the degree of possibility/probability assigned to ``$\Theta \in H$'' or, equivalently, to ``$k(\Theta) \in K$'' is obtained by first getting the marginal contour/density for $k(\Theta)$ at $\kappa$ by optimizing/integrating over $\{\theta: k(\theta)=\kappa\}$ and then optimizing/integrating over $\kappa \in K$.  The same can be said for Choquet integrals of other kinds of functions $h$, but the point is most easily made when $h$ is an indicator.

\section{Proof of Theorem~\ref{thm:credal.char} and Corollary~\ref{cor:credal.char}}
\label{A:proof}

To prove sufficiency, according to \eqref{eq:credal.1}, it's enough to check that, if $\prior$ is as defined in \eqref{eq:credal.2}, then $\prior(C_\alpha) \geq 1-\alpha$ for each $\alpha$.  Using the fact that the $\alpha$-cuts are nested, this follows from the simple manipulation, 
\begin{align*}
\prior(C_\alpha) & = \int_0^1 \kernel^\beta( C_\alpha ) \, \marg(d\beta) \\
& = \int_0^\alpha \underbrace{\kernel^\beta(C_\alpha)}_{\geq 0} \, \marg(d\beta) + \int_\alpha^1 \underbrace{\kernel^\beta(C_\beta)}_{= 1} \, \marg(d\beta) \\
& \geq \marg([\alpha,1]) \\
& \geq 1-\alpha, 
\end{align*}
where the last inequality follows by the stochastically-no-smaller-than-$\unif(0,1)$ property of $\marg$.  To prove necessity,
%\footnote{\color{red} Is more rigor needed here?  Strassen (1965 AoMS) would probably contain the details I need here, but that's a difficult paper...} 
take the given $\prior \in \cred(\uPi)$ and consider a random element $\Theta \sim \prior$.  Note that $\Theta \in \partial C_{\pi(\Theta)}$ with $\prior$-probability~1.  For some intuition, imagine partitioning $\TT$ based on these level sets; then $\Theta$ itself is determined by the level set it's on together with its position on the level set, and hence $\prior$ corresponds to an average of the conditional distribution of $\Theta$, given $\pi(\Theta)$, with respect to the marginal distribution of $\pi(\Theta)$.  It follows from \eqref{eq:credal.1} that the random variable $\pi(\Theta)$ is stochastically no smaller than $\unif(0,1)$, and let $\marg$ be its marginal distribution relative to $\prior$.  Similarly, take $\kernel^\beta$ to be (a version of) the conditional distribution of $\Theta$, given $\pi(\Theta)=\beta$, relative to $\prior$---note that $\kernel^\beta$ is fully supported on $\partial C_\beta \subset C_\beta$, as required.  Then the equality \eqref{eq:credal.2} follows from the law of iterated expectation, completing the proof of Theorem~\ref{thm:credal.char}.

Equality throughout the above display is achieved if, first, $\kernel^\beta(C_\alpha)=0$ for all $\alpha < \beta$ and, second, if $\marg$ is $\unif(0,1)$.  This condition on the kernel is achieved if $\kernel^\beta$ is supported on the boundary, $\partial C_\beta$, of $C_\beta$, for each $\beta \in [0,1]$, hence Corollary~\ref{cor:credal.char}.

\section{Direct implementation of Corollary~\ref{cor:credal.char}}
\label{A:direct}

Corollary~\ref{cor:credal.char} describes the inner probabilistic approximation of the IM $\uPi_z$ as a mixture of kernels $\kernel_z^\alpha$ supported on the boundary $\partial C_\alpha(z)$ of the possibility contour's $\alpha$-cut, for each $\alpha \in [0,1]$.  The reader might mistakingly think that, because $\kernel_z^\alpha$ is supported on a lower-dimensional space, handling such a mixture is an impractically impossible task.  The cause of this confusion is that the reader imagines having to {\em condition} on the probability-0 event associated with the lower-dimensional boundary.  While conditioning on the boundary event is one way to construct such a kernel, that's not the only way.  Instead, I can just define $\kernel_z^\alpha$ to be a distribution supported on $\partial C_\alpha(z)$, and there would rarely be problems with carrying this out.  For instance, an incredibly mild assumption is that each $C_\alpha(z)$ be star-convex, meaning that there exists a point on the interior (e.g., the maximum likelihood estimator $\hat\theta_z$) such that every point on the boundary can be uniquely identified by a ray emanating from that interior point.  The cartoon illustration in Figure~\ref{fig:bean} in the main text is star-convex but not elliptical/convex.  With this structure, sampling on the boundary $\partial C_\alpha(z)$ is equivalent to sampling on the unit ball, which is not particularly challenging.  Then one just needs to identify points on the unit ball with their corresponding points on $\partial C_\alpha(z)$, which would generally require some iterative numerical methods; see below.  An advantage of the second-layer approximation with an ellipsoidal $C_\alpha^\out(z)$ is that the mapping from the ball to $\partial C_\alpha^\out(z)$ is explicit.

The main text made the claim that the elliptically-shaped contour formulation is not crucial the development; such an assumption is imposed solely because (a)~it's the ``right'' choice for large-samples so there's little room for improvement, (b)~it makes the computations relatively simple, and (c)~there is no empirical evidence indicating that this approximation is inaccurate.  I make no claims, however, that the elliptical formulation is universal.  Fortunately, it's easy to extend the method to handle other forms, even the inner probabilistic approximation exactly.  Below I'll describe implementation of the inner probabilistic approximation itself since that's arguably the most general and the most complicated.  Other problem-specific approximations could be considered too, e.g., with $C_\alpha^\sigma(z)$ a suitable Kullback--Leibler neighborhood when the parameter space is the simplex, but these details are beyond the scope of the present paper.  

For this discussion, suppose that the level sets of $\pi_z$ are non-elliptical (like in Figure~\ref{fig:bean} in the main text) but are star-convex in the sense described above.  This is a very mild supposition.  Then to sample on the boundary $C_\alpha(z)$, I just need to sample on the unit ball, i.e., sample a random direction $U$ in $d$-space, and then follow the line in that direction until it intersects with $\partial C_\alpha(z)$.  That is, solve the equation $\pi_z(\hat\theta_z + t U) = \alpha$ as a function of $t > 0$.  Since $\pi_z$ can only be evaluated using Monte Carlo, this root-solving problem can't be solved using Newton's method or bisection---it requires something like the stochastic approximation algorithm employed by Cella and Martin as discussed in Appendix~\ref{A:imvar}.  To summarize, independent samples from the inner probabilistic approximation $\prior_z^\inn$ are drawn by iterating the following two steps:
\begin{enumerate}
\item Draw $A \sim \unif(0,1)$.
\item Given $A=\alpha$, draw $U$ uniformly on the unit $d$-dimensional ball (e.g., by drawing a $d$-dimensional standard Gaussian and then normalizing it) and then set $\Theta$ equal to the root of the function $g(t) = \pi_z(\hat\theta_z + tU) - \alpha$, $t > 0$.  
\end{enumerate}
While carrying out the above computations are not difficult, they're more expensive than the solution advocated for in the main text based on the elliptical approximation.  At least part of the reason for this increased expense is that one needs to perform the root-solving $M$ many times, where $M$ is the desired Monte Carlo sample size, typically in the thousands, whereas the strategy described in the paper only needs root-solving to find $\sigma(z,\alpha)$ on a fixed set of about 100 different $\alpha$ values; see Remark~\ref{re:parallel} below.  

Again, no claims are made that the elliptical approximation is universal, but it's simple, asymptotically optimal, and works well in a wide range of examples, including those with relatively small sample sizes.  Future investigations will explore IM solutions to more complex problems where the full generality of the proposed computational strategy, as described above, will surely be needed.

\section{Cella and Martin's algorithm}
\label{A:imvar}

The goal here is to review some of the relevant details presented in \citet{imvar.ext} concerning the choice of $\sigma=\sigma(z,\alpha)$ employed in Section~\ref{SS:computation} in the main text.  The focus here is on the case where the $\alpha$-cuts $C_\alpha^\sigma(z)$ are ellipsoidal, but similar things can be done with other shapes, as discussed in Appendix~\ref{A:direct} above.  

Recall the eigenvalue-adjusted version $J_z(\sigma)$ of the observed Fisher information matrix $J_z$, i.e., based on the spectral decomposition $J_z = E \Lambda E^\top$.  Also recall the Gaussian possibility's $\alpha$-cut in \eqref{eq:var.cut}:
\[ C_\alpha^\sigma(z) = \{\theta: (\theta - \hat\theta_z)^\top \, J_z(\sigma) \, (\theta - \hat\theta_z) \leq F_d^{-1}(1-\alpha)\}, \quad \alpha \in [0,1]. \]
The key observation is that the desired containment $C_\alpha^\sigma(z) \supseteq C_\alpha(z)$ holds if and only if 
\[ \sup_{\theta \not\in C_\alpha^\xi(z)} \pi_z(\theta) \leq \alpha. \]
Since the contour $\pi_z$ is itself approximately Gaussian \citep{imbvm.ext} and the maximum likelihood estimator $\hat\theta_z$, also the mode of $\pi_z$, is in $C_\alpha^\sigma(z)$, it's expected that the above supremum is attained on the boundary $\partial C_\alpha^\sigma(z)$.  Moreover, since equality in the above display implies a near-perfect match between the IM's and proposed elliptical $\alpha$-cuts, a reasonable goal is to find a root to the function 
\[ g_\alpha(\sigma) := \max_{\theta \in \partial C_\alpha^\sigma(z)} \pi_z(\theta) - \alpha. \]
Design of an iterative algorithm to find this root requires care, primarily because evaluating $\pi_z$ is expensive; so the goal is to evaluate $g_\alpha(\sigma)$ with as few $\pi_z$ evaluations as possible.  \citet{imvar.ext} propose to represent the boundary of $C_\alpha^\sigma(z)$ by $2d$-many vectors 
\begin{equation}
\label{eq:posts}
\vartheta_s^{\sigma, \pm} := \hat\theta_z \pm \bigl\{ F_d^{-1}(1-\alpha) \, \sigma_s \, / \, \lambda_s \}^{1/2} \, e_s, \quad s=1,\ldots,d, 
\end{equation}
where $(\lambda_s, e_s)$ is the eigenvalue--eigenvector pair corresponding to the $s^\text{th}$ largest eigenvalue in the spectral decomposition of $J_z$ mentioned above.  Then define the vector-valued function $\hat g_\alpha$ with components 
\begin{equation}
\label{eq:ghat.xi}
\hat g_{\alpha,s}(\sigma) = \max\{ \pi_z(\vartheta_s^{\sigma, +}), \pi_z(\vartheta_s^{\sigma,-})\} - \alpha, \quad s=1,\ldots,d.
\end{equation}
At least intuitively, negative and positive $\hat g_{\alpha,s}(\sigma)$ indicate that the $\alpha$-cut $C_\alpha^\sigma(z)$ is too large and too small, respectively in the $e_s$-direction.  From here, \citet{imvar.ext} suggest applying a stochastic approximation algorithm \`a la \citet{robbinsmonro} and \citet{kushner} to construct a sequence $(\sigma^{(t)}: t \geq 1)$ of $d$-vectors that converges to a root of $\hat g_\alpha$ and, hence, an approximate root of $g_\alpha$.  For an initial guess $\sigma^{(0)}$, the specific sequence is defined as 
\[ \sigma_s^{(t+1)} = \sigma_s^{(t)} + w_{t+1} \, \hat g_{\alpha,s}(\sigma^{(t)}), \quad s=1,\ldots,d, \quad t \geq 0, \]
where $(w_t)$ is a deterministic sequence that satisfies 
\[ \sum_{t=1}^\infty w_t = \infty \quad \text{and} \quad \sum_{t=1}^\infty w_t^2 < \infty. \]
These steps are iterated until (practical) convergence is achieved, and the limit is what I called $\sigma(z,\alpha)$ in Section~\ref{SS:computation}.  The basic steps are outlined in Algorithm~\ref{algo:varim}, but I refer to \citet{imvar.ext} for further details and discussion.  

\begin{algorithm}[t]
\SetAlgoLined
requires: data $z$, eigen-pairs $(\lambda_s, e_s)$, and ability to evaluate $\pi_z$\; 
initialize: $\alpha$-level, guess $\sigma^{(0)}$, step size sequence $(w_t)$, and threshold $\eps > 0$\; 
set: {\tt stop = FALSE}, $t=0$\; 
\While{{\tt !stop}}{
%set $\xi = \xi_t$\; 
construct the representative points $\{\vartheta_s^{\sigma^{(t)}, \pm}: s=1,\ldots,d\}$ as in \eqref{eq:posts}\; 
evaluate $\hat g_{\alpha,s}(\sigma_s^{(t)})$ for $s=1,\ldots,d$ as in \eqref{eq:ghat.xi}\;
update $\sigma_s^{(t+1)} = \sigma_s^{(t)} \pm w_{t+1} \, \hat{g}_{\alpha,s}(\sigma_s^{(t)})$ for $s=1,\ldots,d$\;
\eIf{$\max_s |\sigma_s^{(t+1)} - \sigma_s^{(t)}| < \eps$}{
  $\sigma(z,\alpha) = \sigma^{(t+1)}$\;
  {\tt stop = TRUE}\;
}{
  $t \gets t+1$\;
}
}
return $\sigma(z,\alpha)$\; 
\caption{Determining $\sigma(z,\alpha)$---from \citet{imvar.ext}.}
\label{algo:varim}
\end{algorithm}

\section{Miscellaneous technical remarks}
\label{A:remarks}

\begin{remark}
\label{re:parallel}
To briefly follow up on the parallelization point in Section~\ref{SS:computation}, there's practical adjustment to the above procedure that dramatically speeds up computations.  Start with a fixed grid $\mathcal{A}$ of, say 100 $\alpha$ values from 0.001 to 0.999.  Then evaluate $\sigma(z,\alpha)$ for all $\alpha \in \mathcal{A}$, which sets a fixed limit on the amount of computational investment required.  Now, for each $A \sim \unif(0,1)$ drawn in Step~1 of Algorithm~\ref{algo:inner}, find the two $\alpha$ values in $\mathcal{A}$ that sandwich $A$ and find $\sigma(z,A)$ via linear interpolation; then proceed to Step~2.  Once the 100 values of $\sigma(z,\cdot)$ are obtained, the samples from $\prior_z^\star$ are virtually free.
\end{remark}

\begin{remark}
\label{re:asymptotics}
Here I offer a brief large-sample analysis to justify the claimed accuracy of the probabilistic approximation.  Without loss of persuasiveness, I give only a heuristic argument building on the rigorous analysis in \citet{imbvm.ext}.  

As mentioned above, \citet{imbvm.ext} show that, under the regularity conditions sufficient for asymptotic normality and efficiency of the maximum likelihood estimator, the relative likelihood-based possibilistic IM enjoys a large-sample Gaussianity property, akin to the classical Bernstein--von Mises theorem fundamental to Bayesian analysis; see \eqref{eq:approx.gaussian}.  The uniform mode of their convergence result implies, first, that the IM's $\alpha$-cuts, $C_\alpha(z)$, will merge with the $\alpha$-cuts, $C_\alpha^1(z)$, of the elliptical approximation, with $\sigma=1$, as the size of the sample $z$ increases to $\infty$.  Second, since $\sigma(z,\alpha)$ is designed to make the $\alpha$-cuts $C_\alpha^{\sigma(z,\alpha)}(z)$ agree with the original IM's, one fully expects that the components of $\sigma(z,\alpha)$ are converging to 1 as the sample size increases.  

Putting everything together, the original IM's $\alpha$-cuts merge with the corresponding Gaussian $\alpha$-cuts asymptotically, and the components of the index $\sigma(z,\alpha)$ identified by Cella and Martin's algorithm are converging to 1.  Therefore, the symmetric difference $C_\alpha(z) \, \triangle \, C_\alpha^{\sigma(z,\alpha)}(z)$ is converging to $\varnothing$, which implies that the proposed sampling algorithm exactly recovers the IM's limiting inner probabilistic approximation, which is the Gaussian distribution just like in the example in Section~\ref{SS:insights}.
\end{remark}

\begin{remark}
\label{re:caliboot}
As mentioned briefly in the main text (see Section~\ref{SS:computation}), the proposed IM approximation has some connections to and, in a certain sense, is inspired by the {\em calibrated bootstrap} developments in \citet{calibrated.boostrap}.  I won't attempt to give a complete description of the calibrated bootstrap approach here; I'll only give enough detail---using my own context and notation---that a meaningful comparison with my proposed solution can be made.  While the authors don't describe their approach in this way, what they are proposing relies (implicitly) on a characterization result like in Theorem~\ref{thm:credal.char} and their implementation follows the same basic form as in Algorithm~\ref{algo:inner}.  Where the two approaches differ is, of course, in the details behind Step~2 of Algorithm~\ref{algo:inner}.  Where I construct a mixture of kernels supported on suitable $\alpha$-cut boundaries, \citet{calibrated.boostrap} adopt a bootstrap strategy.  My construction of the kernels requires certain $\alpha$-specific tuning and so does their bootstrap---they work with a $m$-out-of-$n$ bootstrap strategy where the ``$m$'' is suitably tuned to a particular $\alpha$ level.  They then fix a range of $\alpha$s and evaluate $m=m(z,\alpha)$ at each $\alpha$ on that range; compare this to my modification of Algorithm~\ref{algo:inner} described in Remark~\ref{re:parallel} above.  For each $m=m(z,\alpha)$ values, determined by the range of $\alpha$s, a bunch of $m$-out-of-$n$ bootstrap samples are taken, resulting in a collection of $\alpha$-specific bootstrap distributions of the maximum likelihood estimator.  All these samples from $\TT$, call them $\Theta^{(s)}$, along with their corresponding $\pi_z(\Theta^{(s)})$ values, are aggregated into a bag, say, $\{\Theta^{(s)}, \pi_z(\Theta^{(s)})\}$, indexed by $s$.  Finally, they repeatedly sample $A \sim \unif(0,1)$ and, for each $A$, they identify the $\Theta^{(s)}$ such that $|\pi_z(\Theta^{(s)}) - A|$ is minimized.  The empirical distribution of those $\Theta^{(s)}$ values retained in this process can then be compared to the samples from $\prior_z^\star$ that I get from applying my proposed sampling algorithm.  

An advantage of the calibrated bootstrap strategy, compared to the proposal in the present paper, is that it doesn't rely on any assumptions about what roughly the form of the $\alpha$-cuts of $\pi_z$ would look like; in that sense, calibrated bootstrap is more robust then my solution.  Moreover, at least intuitively, if one goes to the asymptotic limit, each of the $\alpha$-specific bootstrap distributions should be approximately the same and equivalent to the asymptotically Gaussian limiting distribution of the maximum likelihood estimator.  In that case, retaining a draw with contour closest to $A \sim \unif(0,1)$ is roughly like conditioning the aforementioned Gaussian distribution on the boundary of the $A$-specific ellipse---basically the same as what I'm proposing here.  

The downside to calibrated bootstrap, and what motivated my proposal here, is that {\em if} the level sets are roughly elliptical---as they often are in applications, at least with under a suitable parametrization---then there is substantial computational simplifications that can be enjoyed.  In particular, while I can obtain thousands of samples from $\prior_z^\star$ with only a few hundred Monte Carlo evaluations of $\pi_z$, the calibrated bootstrap needs to evaluate $\pi_z$ at every $\Theta^{(s)}$ sampled from the different $m$-out-of-$n$ bootstraps, and enough of these are needed to ensure the range $[0,1]$ for $\alpha$ is adequately covered, so thousands of Monte Carlo evaluations of $\pi_z$ would be needed.  

As I see it, both my proposal and calibrated bootstrap stand on their own---neither uniformly dominates the other.  It's perfectly natural that, for example, in high-dimensional cases where the elliptical form may not be appropriate, something more robust like calibrated bootstrap would be desired.  But in the present paper's context, i.e., the not-very-high-dimensional cases common in everyday applications, the elliptical form (or perhaps some other specific form) could be justified and, in such cases, the ``nonparametric'' aspect of calibrated bootstrap could be considered as overkill, so it's advantageous save computational resources by using a procedure like mine.  
%{\color{magenta} On the one hand, the calibrated bootstrap proposal is more general than that here because it doesn't require specification of a parametric/Gaussian family of approximate distributions---it uses bootstrap to learn a full distribution nonparametrically.  On the other hand, learning the full distribution requires computational effort and, furthermore, that full distribution is not well-suited for the condition-to-the-boundary step.  Consequently, the calibrated bootstrap will spend considerable computational resources to evaluate the contour $\pi_z$ at points which will ultimately be discarded from the sample that describes $\prior_z^\star$.}
\end{remark}

\section{Details about IM marginalization}
\label{A:marginal}

As briefly mentioned in the main text, there are two basic strategies for eliminating nuisance parameters in the possibilistic IM framework---namely, an {\em extension-based} and a {\em profiling-based} strategy---and I'll discuss each of these briefly here.  At a high level, the extension-based marginalization strategy is conceptually and computationally simple but less statistically efficient, whereas the profiling-based strategy is more complicated but tends to be more efficient.  Importantly, both strategies guarantee that the corresponding marginal IM for the quantities of interest is valid so, unlike Bayes, fiducial, and other attempts at probabilistic inference, there's no risk of losing this basic reliability property in the possibilistic IM framework.  

As indicated in Appendix~\ref{A:primer} and in the main text, to a less extent, optimization is for possibility theory what integration is for probability theory.  So, it makes sense that one could eliminate nuisance parameters from a possibilistic IM by using optimization.  Formally, this operation is known as {\em extension} in the possibility theory literature, dating back at least to \citet{zadeh1978} and it says the following: if $\theta \mapsto \pi_z(\theta)$ is a possibility contour quantifying uncertainty about $\Theta$, then a corresponding possibility contour that quantifies uncertainty about $\Phi = f(\Theta)$ is given by 
\[ \pi_z^\text{\sc ex}(\phi) = \sup_{\theta: f(\theta)=\phi} \pi_z(\theta), \quad \phi \in f(\TT). \]
This extension-based marginalization strategy can be readily compared with the corresponding operation in probability theory: replace the contour by a probability density function and optimization by integration.  This is conceptually and computationally straightforward, since it only involves operating on the already-available contour $\pi_z$.  As stated in the main text, the possibility calculus guarantees the reliability of $\pi_z$ is transferred to its extension $\pi_z^\text{\sc ex}$, regardless of the feature mapping $f$; the reader is encouraged to do the quick 1--2 line proof to verify this claim.  Unfortunately, as is often the case with simple, universal solutions, it need not be the most efficient solution available.  This inefficiency is easy to see empirically in concrete examples, and was proved in general for large samples in \citet{imbvm.ext}.  

A second approach to marginalization takes the specific feature mapping $f$ into consideration in the IM construction.  If $\Phi=f(\Theta)$ is the relevant quantity of interest, then one can define a relative profile likelihood 
\[ R^\text{\sc pr}(z, \phi) = \sup_{\theta: f(\theta)=\phi} R(z,\theta) = \frac{\sup_{\theta: f(\theta)=\phi} L_z(\theta)}{\sup_\theta L_z(\theta)}, \quad z \in \ZZ, \quad \phi \in f(\TT). \]
Then this can be treated as the ranking function in the IM construction, leading to a new, $f$-specific marginal IM contour 
\[ \pi_z^\text{\sc pr}(\phi) = \sup_{\theta: f(\theta)=\phi} \prob_\theta\{ R^\text{\sc pr}(Z,\phi) \leq R^\text{\sc pr}(z,\phi)\}, \quad \phi \in f(\TT). \]
The supremum outside the $\prob_\theta$-probability evaluation above is present because, even though the relative profile like $R^\text{\sc pr}(Z,\phi)$ doesn't directly depend on other aspects of $\theta$, it's distribution generally depends on all of $\theta$; there are some cases in which the relative profile likelihood is a pivot in the sense that its distribution only depends on $\theta$ through $f(\theta)$ and, in such cases, the supremum can be removed.  For obvious reasons, this is referred to as a profile-based marginalization strategy.  It's more involved because it requires the direct construction of an $f$-specific IM, but the payoff for this extra effort is improved efficiency.  Again, this efficiency is easy to see empirically in concrete examples, and it was proved in general for large samples in \citet{imbvm.ext}.  

As it relates to the developments in the present paper, what I referred to as the {\em indirect marginalization approach} using the inner probabilistic approximation would, in ideal cases, correspond to the extension-based marginalization strategy described above.  That is, in certain cases, if $\Theta$ is drawn from the inner probabilistic approximation $\prior_z^\star$ of $\uPi_z$, then the marginal distribution of $f(\Theta)$ derived from that of $\Theta$ using the rules of ordinary probability would, in such cases, correspond to the inner probabilistic approximation of $\uPi_z^\text{\sc ex}$.  The {\em direct marginalization approach}, however, would first directly construct a marginal IM using, say, the profile-based strategy described above, and then obtain the inner probabilistic approximation thereof.  This latter approach is safer and generally more efficient, but the trade-off is that it's not as computationally convenient.

\section{Additional numerical results}
\label{A:numerical}

\subsection{Bivariate normal correlation}

A challenging one-parameter inference problem is the bivariate normal with known means and variances but unknown correlation.  Suppose that $Z=(Z_1,\ldots,Z_n)$ are iid, with $Z_i=(X_i,Y_i)$ a bivariate normal random vector with zero means, unit standard deviations, and correlation $\Theta \in \TT = (-1,1)$ to be inferred.  Interestingly, that the means and variances are known makes the problem more difficult---it's a curved exponential family so the minimal sufficient statistic is not complete and there are various ancillary statistics available to condition on \citep[e.g.,][]{basu1964}.  How this affects asymptotic inference is detailed in \citet{reid2003}.  It is straightforward to construct a possibilistic IM \citep[][Example~2]{martin.basu}, which is exactly valid for all sample sizes and asymptotically efficient, but computation of the naive approximation \eqref{eq:pi.naive} is relatively expensive because there's no pivotal structure and no closed-form expression for the maximum likelihood estimator.  The variational approximation proposed in \citet{imvar.ext} is fast and easy, but it loses the exact validity of the original IM.  Here I apply the proposed Monte Carlo sampling strategy on Fisher's transformation scale, i.e., on $\Psi = \text{arctanh}(\Theta)$, the inverse hyperbolic tangent.  Figure~\ref{fig:cor}(a) shows a histogram of 5000 samples from the distribution $\prior_z^\star$ of $\Psi$ as described above, based on (a centered and scaled version of) the law school admissions data analyzed in \citet{efron1982}, where $n=15$ and the maximum likelihood estimator of $\Theta$ is $\hat\theta_z = 0.789$.  Overlaid on this plot is, first, a normal density with mean $\text{arctanh}(0.789) = 1.07$ and estimated standard deviation and, second, the kernel density estimate; the difference between these two estimates is negligible.  Panel~(b) shows the exact IM contour and three stitched IM approximations as in \eqref{eq:p2p} corresponding to three different ranking functions $r_z$.  All three approximations have roughly the same shape as the true contour, as expected.  The two based on density estimates---Gaussian and kernel---are quite similar and closely agree with the exact contour.  The likelihood-based approximation is very accurate on the right-hand side but a bit conservative on the left-hand side; this is because, as in Figure~\ref{fig:cor}(a), the distribution $\prior_z^\star$ of the Fisher-transformed $\Theta$ is symmetric but the likelihood-based ranking is not.  

\begin{figure}[t]
\begin{center}
\subfigure[Samples from $\prior_z^\star$]{\scalebox{0.55}{\includegraphics{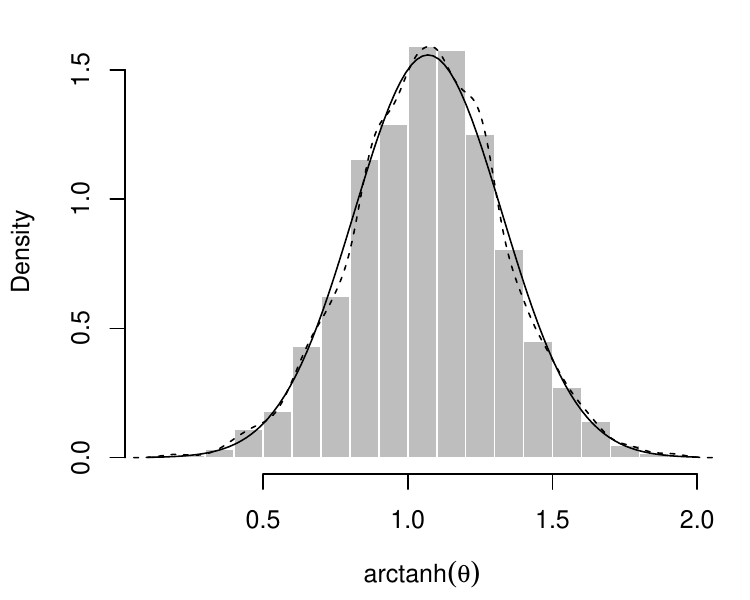}}}
\subfigure[Possibility contours]{\scalebox{0.55}{\includegraphics{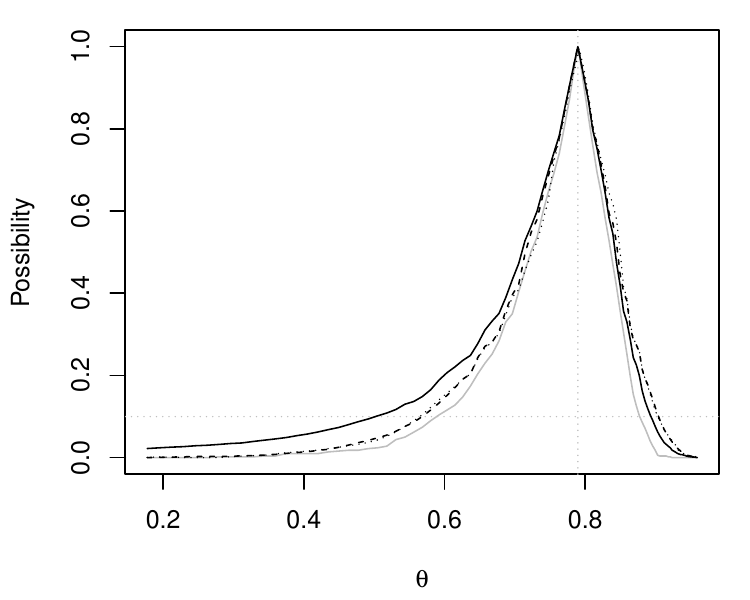}}}
\end{center}
\caption{Summary of the results for the bivariate normal correlation. In Panel~(a), the solid line is the Gaussian density and dashed line is the kernel density estimate from R's {\tt density}.  In Panel~(b), the gray line is the exact IM contour, the dashed and dotted lines are approximations using the Gaussian and kernel density rankings, respectively, and the solid line is the approximation based on the likelihood ranking.}
\label{fig:cor}
\end{figure}

\subsection{Moderate-dimensional logistic regression}

There's not much to say here, just reporting some summary results from fitting the moderate-dimensional logistic regression using the proposed approximate IM strategy.  In this case, the inner probabilistic approximation $\prior_z^\star$ to the possibilistic IM is a nine-dimensional probability distribution.  Figure~\ref{fig:pima.pairs} below partially summarizes the full joint distribution, showing the individual marginal distributions and each of the pairwise joint distributions.  Note that all of these look at least approximately Gaussian, which justifies my choice to construct an approximation of the full, nine-dimensional possibility contour $\pi_z$---and the corresponding marginal possibility contours displayed in Figure~\ref{fig:pima} in the main text---using a Gaussian ranking function as described in Section~\ref{SS:back.to.poss}.  

\begin{figure}[t]
\begin{center}
\scalebox{1.1}{\includegraphics{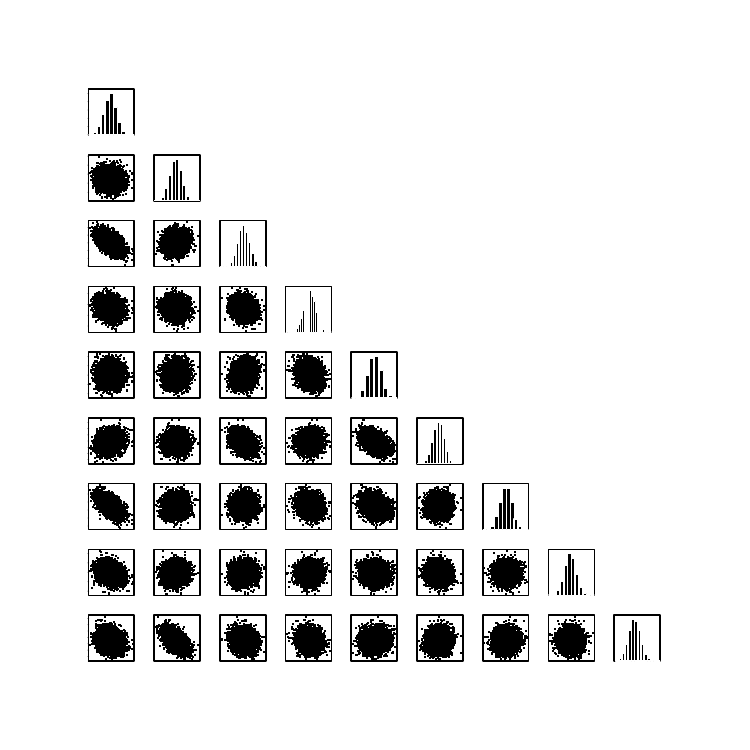}}
\end{center}
\caption{Visual summaries of the marginal and pairwise joint distributions corresponding to the nine-dimenaional inner probabilistic approximation $\prior_z^\star$ in the logistic regression model fit to the Pima Indians diabetes data example.}
\label{fig:pima.pairs}
\end{figure}

\bibliographystyle{apalike}
\bibliography{/Users/rgmarti3/Dropbox/Research/mybib.bib}

\end{document}